\newtheorem{mechanism}{Mechanism}
\newcommand{\hau}[1]{\textcolor{OliveGreen}{Hau: \textrm{#1}}}
\DeclareMathAlphabet{\mathpzc}{OT1}{pzc}{m}{it}
\begin{document}
\title{Facility Location Games with Ordinal Preferences
}

%
\author{Hau Chan\inst{1} \and
Minming Li\inst{2}\and
Chenhao Wang\inst{1}}
\authorrunning{Chan et al.}
%
\institute{University of Nebraska-Lincoln, NE, USA
\email{\{hchan3,cwang55\}@unl.edu}   \and
City University of Hong Kong, HKSAR, China
\email{minming.li@cityu.edu.hk}
}
\maketitle              
\begin{abstract}
 We consider a new setting of facility location games with ordinal preferences. In such a setting, we have a set of agents and a set of facilities. Each agent is located on a line and has an ordinal preference over the facilities. Our goal is to design strategyproof mechanisms that elicit truthful information (preferences and/or locations) from the agents and locate the facilities to minimize both maximum and total cost objectives as well as to maximize both minimum and total utility objectives. For the four possible objectives, we consider the 2-facility settings in which only preferences are private, or locations are private.
 For each possible combination of the objectives and settings, we provide lower and upper bounds on the approximation ratios of strategyproof mechanisms, which are asymptotically tight up to a constant. Finally, we discuss the generalization of our results beyond two facilities and when the agents can misreport both locations and preferences.
\keywords{Facility location  \and Mechanism design \and Approximation.}
\end{abstract}

\section{Introduction}
Facility location games have been widely studied  in recent decades (see e.g., \cite{Dokow:2012aa,lu2010asymptotically,procaccia2009approximate}).
In the typical setting, we have a set of agents, a set of facilities,
and a set of possible locations (e.g., a bounded interval $[0, 1]$).
Each agent is located within the set of locations, where the agent's location is private information.
The goal is to design a \emph{strategyproof} mechanism that elicits true locations of the agents
and locates the facilities to (approximately) minimize the \emph{maximum}
distance/cost objective or \emph{total} distance/cost objective of the agents to their respective closest facilities.
Under this setting, each agent is indifferent about the facilities,
and, naturally, the agent's only interest is his/her closest facility.

When each agent has an \emph{ordinal} (or a complete ranking) preference over the facilities,
the standard facility location games and their variants \cite{Anastasiadis:2018aa,yuan2016facility} no longer
capture the tradeoff between the agent's ordinal preference and the agent's distances to the facilities.
Namely, even if a facility is closest to the agent, the agent could prefer
going to another facility that is farther away {due to} the agent's underlying ordinal preference.
In this work, our focus is to consider
facility location games that incorporate agents' ordinal preferences over the set of facilities,
{and design mechanisms} 
under several objectives and truthful elicitation requirements (i.e., preferences and/or locations).
Below, we provide several motivation examples in public facility domains for the necessity of considering
ordinal preferences in facility location games.

\smallskip\noindent\textbf{Motivation.}
{As a first example}, a planner wants to build two substitutable public facilities (e.g., schools, libraries, and parks)
on the real line (e.g., road), on which finitely many agents locate.
For instance, one can be a public library and the other can be a specialized library
(e.g., with special/unique contents or features). 
Both of {them} can serve and accommodate any agent in the system.
Some agents prefer the full range of general books in the public library,
while others prefer the specialized books and environment in the specialized library.
As a result, each agent has a preference over these two facilities.

{In the second example,} the planner plans to open two public schools (e.g., kindergartens)
with different education systems
(e.g., traditional, charter, montessori, or magnet) within a set of locations,
where one school ({e.g.}, magnet) pays more attention to some specific field of study,
and the other ({e.g.}, montessori) focuses on openness and guidance.
The parents have their own perspectives on education for their children
and therefore have different preferences over the two public schools.

In {both} of the above examples, we must consider
the tradeoff between the agent's ordinal preference
(first choice and second choice) and the agent's distances to the facilities.


\smallskip\noindent\textbf{Our Contribution.}
We introduce a new model of facility location games with ordinal preferences
where each agent can express his/her preference over the set of facilities
as well as his/her location (in the interval of [0,1]). {This model builds upon the facility location problems by incorporating agents’
preferences of facilities nontrivially.
Such a notion that allows the agents to
express their preferences over the facilities (such as preference over heterogeneous
parks and libraries) directly.
Adding such a preference feature requires us to nontrivially derive the appropriate facility location models.} 
We then propose appropriate cost objective and utility objective to capture
the tradeoff between an agent's preference and the agent's distance to the facilities.
The agent's cost (utility) for a facility is
his/her distance (one minus his/her distance) to the facility multiplied (divided) by a discount factor that
depends on the ranking in the agent's ordinal preference.
Naturally, an agent's cost (utility) objective is defined to be the minimum cost (maximum utility) over all facilities.
The interpretation is that less preferred facilities would be more costly
for the agents even if they are closer to the agents.

We study the problems of designing strategyproof and (approximately)
optimal {mechanisms} to minimize the maximum cost objective and total cost objective
as well as to maximize the minimum utility objective and total utility objective over all of the agents,
in settings where either ordinal preference or location information is private. Let $\alpha\ge 1$ be a constant  coefficient that characterizes the different preference of agents over the facilities, which can be viewed as a discount factor and will be formally defined in Section \ref{sec:model}.\footnote{For example, when there are two facilities, each agent incurs a cost equal to the minimum between
the distance to the second choice multiplied by factor $\alpha$, and
the distance to the first choice. As such, the role of $\alpha$ here is
to model the tradeoff between the distances and agent preferences (i.e., a less preferred facility will be viewed as further
away from the agent discounted by $\alpha$).}
{We leverage simple mechanisms to derive the upper bound (UB) results
and, to derive the lower bound (LB) results,
we carefully identify and construct various instances for the corresponding settings.}
Table \ref{tab:1} and \ref{tab:2} summarize our LB and UB results on the approximation ratios
of strategyproof mechanisms for the 2-facility setting, which are asymptotically  tight up to a small constant. {In particular, for the minimum utility objective in Table \ref{tab:1} with $\alpha\ge 2$, the bounds are exactly tight, and for the total cost objective in Table \ref{tab:2}, the bounds are very close when $\alpha$ is small.}
In Section \ref{sec:ex}, we discuss the generalization of our results
beyond two facilities and settings where preferences and locations are private.
We also discuss an alternative additive model where an agent's cost/utility for a facility is adjusted by adding/subtracting a constant based on the agent's ordinal preference. {All omitted proofs can be found in Appendix.}


\vspace{-8mm}\begin{table}[htbp]
  \centering
  \caption{A summary of our results with private preferences.}\label{tab:1}

    \begin{tabular}{|p{1.55cm}<{\centering}|p{4.8cm}<{\centering}|p{3.3cm}<{\centering}|}
    \hline
    \small Objective &  \small {Maximum/Minimum} &\small {Total} \bigstrut\\
    \hline
    \multirow{2}[2]{*}{\small Cost} & \multirow{1}[1]{*}{\small UB: $\alpha$  } & {\small UB: $\alpha$ } \bigstrut[t]\\

                                    & {\small LB: ~$\max\{1,\frac{\alpha}{2}\}$ } & {\small LB:~$\max\{1,\frac{\alpha}{4}\}$  } \bigstrut[b]\\
    \hline
    \multirow{2}[2]{*}{\small{Utility}} & {\small UB: 1 for $\alpha\ge2$, $\alpha$ for $\alpha< 2$ } & {\small UB: $\min\{2,\alpha\}$} \bigstrut[t]\\

                                    &  \multirow{2}[1]{*}{\shortstack{\small LB: 1 for $\alpha\ge 2$,  \\\small $\min\{\frac{2}{\alpha},\frac{3\alpha+1}{2\alpha+2}\}$ for $\alpha<2$}}
                                   & {\small  LB: $\frac{30\alpha^2+38\alpha+2}{30\alpha^2+37\alpha+3}$ ~~~~~~~~~~~~~~~} \bigstrut[b]\\
\hline
    \end{tabular}
\end{table}

\vspace{-15mm} \begin{table}[htbp]
  \centering
  \caption{A summary of our results with private locations.}
  \label{tab:2}

    \begin{tabular}{|p{1.55cm}<{\centering}|p{4.8cm}<{\centering}|p{3.3cm}<{\centering}|}
    \hline
    \small Objective &  \small {Maximum/Minimum} &\small {Total} \bigstrut\\
    \hline
    \multirow{2}[2]{*}{\small Cost} & \multirow{1}[1]{*}{\small UB: $2\alpha$ } & {\small UB: $\alpha(n-2)$  } \bigstrut[t]\\

                                    & {\small LB: ~$\alpha$ } & {\small LB:  $\frac{(\alpha+1)(n-2)}{2}$ }  \bigstrut[b]\\
    \hline
    \multirow{2}[2]{*}{\small{Utility}} & {\small UB: 2 } & {\small UB: 2 } \bigstrut[t]\\

                                    &  \multirow{2}[1]{*} {\shortstack{\small LB: 1.5 for $\alpha\ge 3$, \\\small $\min\{\frac{\alpha+1}{2},\frac76\}$ for $\alpha<3$}}
                                    & {\small LB: $1+\frac{\alpha-1}{(2\alpha+2)/t-\alpha}$, for $t=\min\{\frac{1}{3\alpha},1-\frac{1}{\alpha}\}$} \bigstrut[b]\\
    \hline
    \end{tabular}
\end{table}



\vspace{-6mm}
\paragraph{Related work.}

{Existing} preference models in facility location games consider settings where either each agent  cares about its {closest/farthest} facility in the set of preferred facilities \cite{yuan2016facility}
or each agent 
cares about all of the facilities in that set {\cite{serafino2015truthful,serafino2016heterogeneous}.} 
In particular, Yuan \emph{et al.} \cite{yuan2016facility} consider the 2-facility setting where
each agent reports his/her willingness of going to one of the two facilities or to both facilities.
The agent's location in their setting is public, and the agent is interested in his/her {closest/farthest} facility.
Such a setting does not capture the agent's complete ordinal preference and cost/utility of the two facilities directly.
{Serafino and Ventre \cite{serafino2015truthful,serafino2016heterogeneous} consider the setting where each agent reports a subset of preferred facilities, and the cost is the total distance to all preferred facilities.}
Their setting does not necessarily capture agents' ordinal preferences and model agents' interest in going to exactly one facility.

Another line of research on the ordinal preferences is that, every agent reports a linear preference order on the set of candidate facilities to be opened, and a decision-maker opens facilities on the candidates. The optimization problem is considered in \cite{hanjoul1987facility,vasil2009new}. For the mechanism design problem,  Feldman \emph{et al.} \cite{feldman2016voting} study approximation mechanisms in both strategic and non-strategic settings.

There are many studies on different variants of facility location games.   {Feigenbaum \emph{et al.} \cite{feigenbaum2020strategic} consider}
a 1-facility setting
where each agent specifies whether he/she likes or dislikes the facility.
Fong et al. \cite{Fong:aa} consider the 2-facility setting
where each agent reports the fractional preferences (e.g., proportions of usage) of the two facilities. Hossain, Micha, and Shah \cite{hossainsurprising} study the model where each agent may hold several locations on the line with different degrees of importance to the agent, and they introduce a new manipulation: agents may hide some of their locations. 
In the last decade many variants have been studied: strategically reporting the opening costs of facilities \cite{chen2019truthful}, exploring double-peaked preferences of agents \cite{filos2017facility,chen2018mechanism}, different objectives \cite{feldman2013strategyproof} and capacitated constraint \cite{aziz2018capacity,aziz2019facility}. See an overview in \cite{chan2021mechanism}.

\section{Preliminaries}\label{sec:model}

 Let $N=\{1,2,\ldots,n\}$ be the set of agents on a line segment normalized by the unit interval $[0,1]$, 
 and the location profile is $\mathbf x=(x_1,x_2\ldots,x_n)$. 
 For simplicity, assume $x_1\le x_2\le \cdots\le x_n$.
 The planner aims to locate $m$ facilities $F_1,\ldots, F_m$ at some locations on $[0,1]$.
 Each $i\in N$ has an ordinal preference over the $m$ facilities.
 {Agent $i$'s ordinal preference  is denoted by a ranking $\boldsymbol \sigma_i$ over the $m$ facilities.} Denote by $\boldsymbol \sigma=(\boldsymbol \sigma_1,\ldots,\boldsymbol \sigma_n)$ the agents' preference profile. 

 A profile $\mathbf p=(\mathbf x,\boldsymbol \sigma)$ is a collection of the location and preference reported by all agents. A (deterministic) mechanism is a function $f$ which maps profile $\mathbf p$ to an output $\mathbf y=\langle y_1,\ldots,y_m\rangle\in [0,1]^m$ that {locates} facility $F_j$ at $y_j$. 

A mechanism is \emph{strategyproof} (SP), if by reporting the information truthfully,
each agent gains at least as much as that when misreporting, regardless of what others do, under the mechanism's outputs. 
A mechanism is \emph{group strategyproof} (GSP), if no group of agents can collude to misreport their information in a way that makes every member better off.

\smallskip\noindent\textbf{Multiplicative model.} Let $d(a,b)=|a-b|$ be the Euclidean distance between $a$ and $b$. Define $d(x,\mathbf y)=\min_{y\in \mathbf y}d(x,y)$ for any point $x$ and location profile $\mathbf y$.
Let $1=\alpha_1\le\alpha_2\le\cdots\le\alpha_m$ be constant coefficients.
We use these coefficients to characterize the different preference of agents over the facilities.
Each agent incurs a cost equal to the minimum among the distance to his
$k$-th choice multiplied by a factor $\alpha_{k}$.
With some abuse of notations, we assume each agent $i\in N$ has the preference $\boldsymbol \sigma_i=(\sigma_1,\ldots,\sigma_m)$, which indicates that $F_{\sigma_j}$ is the $j$-th most preferred facility. We consider the following objectives with respect to costs and utilities.

\smallskip\emph{Cost objectives.} Given the facilities' location profile $\mathbf y=\langle y_1,\ldots,y_m\rangle$, each $i\in N$ with preference $\boldsymbol \sigma_i$ 
has a cost
$$ c_i(\mathbf y)=\min\{\alpha_1 d(x_i,y_{\sigma_1}),\alpha_2 d(x_i,y_{\sigma_2}),\ldots,\alpha_{m}d(x_i,y_{\sigma_m})\},$$
where $y_{\sigma_j}$ is the location of the $j$-th preferred facility (i.e. $F_{\sigma_j}$) of agent $i$.
 That is, the cost of an agent equals the minimum weighted distance {among the multiplicative weighted distances to all facilities. } 
We wish to minimize the total cost $SC(\mathbf y)=\sum_{i\in N}c_i(\mathbf y)$ or the maximum cost $BC(\mathbf y)=\max_{i\in N}c_i(\mathbf y)$.  
We say a {strategyproof} mechanism $f$ is $r$-approximate with a number $r\ge 1$ under the objective of minimizing the total (resp. maximum) cost, if for any {(truthful)} profile $\mathbf p$, the output satisfies $\frac{SC(f(\mathbf p))}{OPT_S(\mathbf p)}\le r$ (resp. $\frac{BC(f(\mathbf p))}{OPT_B(\mathbf p)}\le r$), where $OPT_S(\mathbf p)$ (resp. $OPT_B(\mathbf p)$) is the optimal objective value of the instance with $\mathbf p$.

\smallskip\emph{Utility objectives.} Given the facilities' location profile $\mathbf y$, each agent $i\in N$ with preference $\boldsymbol \sigma_i$ has a utility
 {\small $$u_i(\mathbf y)=\max\{\frac{1-d(x_i,y_{\sigma_1})}{\alpha_1}, \frac{1-d(x_i,y_{\sigma_2})}{\alpha_2},\ldots,\frac{1-d(x_i,y_{\sigma_m})}{\alpha_m}\}.$$}That is,
 the agent's utility is the maximum discounted utility (based on the ordinal preferences) of the facilities.
 We wish to maximize the total utility $SU(\mathbf y)= \sum_{i\in N}u_i(\mathbf y)$ or minimum utility {$BU(\mathbf y)=\min_{i\in N}u_i(\mathbf y)$.} 
 The approximation ratio can be defined similarly.

Our goal is to design (group) strategyproof mechanisms that (approximately) optimize the objective values.
{All mechanisms considered in this paper are deterministic. In most part of the paper (except Section \ref{sec:ex}), we consider locating two facilities, that is, $m=2$. For notation convenience, we denote the preference of each agent by the index of his preferred facility (e.g., the preference of agent $i$ who prefers $F_1$ is $\boldsymbol \sigma_i=1$), and the preference profile is $\boldsymbol \sigma\in\{1,2\}^n$.  }

When there are two facilities, we write $\alpha_2$ simply as $\alpha$, ignoring $\alpha_1 = 1$.  Given coefficient $\alpha$, denote by $\Gamma^{\alpha}$ the mechanism design problem
 for optimizing a specific system objective, and denote by $\Gamma^{\alpha}(\mathbf p)$ (or simply $\Gamma(\mathbf p)$) an instance {of problem $\Gamma^{\alpha}$} with the specific profile $\mathbf p$. 

 {For example, consider a 3-agent instance with location profile $\mathbf x=(0,0.4,1)$ and $\alpha=3$. Agents 1 and 3 prefer $F_1$, and agent 2 prefers $F_2$. Let $\mathbf y=(0.2,0.8)$ be the facility locations. Then the cost of agent 1 is $c_1(\mathbf y)=\min\{0.2,0.8\alpha\}=0.2$, and $c_2(\mathbf y)=\min\{0.2\alpha,0.4\}=0.4$, $c_3(\mathbf y)=\min\{0.8,0.2\alpha\}=0.6$.
 }

 Before presenting upper and lower bounds on the approximation ratio of strategyproof mechanisms, we give the following result on the relationship between $\Gamma^1$ and $\Gamma^{\alpha}$. Note that $\Gamma^1$ is equivalent to {the typical setting of facility location games, where each agent is indifferent of the facilities.} 

 \begin{proposition}\label{prop}
 Under any of the four possible objectives defined above, if a mechanism $f$ is $\beta$-approximate for $\Gamma^1$, then it is $\beta\alpha$-approximate for $\Gamma^{\alpha}$. 
 \end{proposition}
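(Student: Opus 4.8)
The plan is to compare, instance by instance, the cost (or utility) of an agent in the problem $\Gamma^{\alpha}$ against the cost (or utility) of the ``same'' agent in the problem $\Gamma^1$, where by ``the same agent'' I mean an agent at the same location but with the $\alpha$-coefficients all collapsed to $1$. The key observation is that for a fixed location profile $\mathbf x$, a fixed preference profile $\boldsymbol\sigma$, and a fixed facility output $\mathbf y$, the $\Gamma^{\alpha}$-cost of agent $i$ is $c_i^{\alpha}(\mathbf y)=\min_j \alpha_j\, d(x_i,y_{\sigma_j})$ while the $\Gamma^1$-cost is $c_i^{1}(\mathbf y)=\min_j d(x_i,y_{\sigma_j})=d(x_i,\mathbf y)$. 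Since $1=\alpha_1\le\alpha_2\le\cdots\le\alpha_m\le\alpha$, we have $d(x_i,\mathbf y)\le c_i^{\alpha}(\mathbf y)\le \alpha\, d(x_i,\mathbf y)$ pointwise, and the same sandwiching holds after summing over $i$ or taking the max over $i$. For utilities, the analogous pointwise bound is $\tfrac{1}{\alpha}(1-d(x_i,\mathbf y))\le u_i^{\alpha}(\mathbf y)\le 1-d(x_i,\mathbf y)=u_i^{1}(\mathbf y)$, which again survives summation and the $\min$ over agents.

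First I would fix an arbitrary truthful profile $\mathbf p=(\mathbf x,\boldsymbol\sigma)$ for $\Gamma^{\alpha}$ and let $\mathbf y=f(\mathbf p)$ be the mechanism's output; note $f$ produces the same output whether we think of the instance as $\Gamma^1(\mathbf p)$ or $\Gamma^{\alpha}(\mathbf p)$ since $f$ is a function of $\mathbf p$ alone. Let $\mathbf y^*$ be an optimal solution for $\Gamma^{\alpha}(\mathbf p)$ and $\mathbf z^*$ an optimal solution for $\Gamma^1(\mathbf p)$. For the total-cost objective I would chain: $SC^{\alpha}(f(\mathbf p)) \le \alpha\cdot SC^{1}(f(\mathbf p)) \le \alpha\beta\cdot SC^{1}(\mathbf z^*) \le \alpha\beta\cdot SC^{1}(\mathbf y^*) \le \alpha\beta\cdot SC^{\alpha}(\mathbf y^*) = \alpha\beta\cdot OPT^{\alpha}_S(\mathbf p)$, where the first inequality is the pointwise sandwich summed up, the second is the $\beta$-approximation guarantee of $f$ for $\Gamma^1$, the third is optimality of $\mathbf z^*$ for $\Gamma^1$, and the fourth is again the pointwise sandwich. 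The maximum-cost case is identical with $\sum$ replaced by $\max$. The minimum-utility and total-utility cases run the same way but with the inequalities reversed appropriately: $BU^{\alpha}(f(\mathbf p)) \ge \tfrac1\alpha BU^{1}(f(\mathbf p)) \ge \tfrac1\alpha\cdot\tfrac1\beta BU^{1}(\mathbf z^*) \ge \tfrac1\alpha\beta\, BU^{1}(\mathbf y^*) \ge \tfrac1\alpha\beta\, BU^{\alpha}(\mathbf y^*)$, so the ratio $BU^{\alpha}(\mathbf y^*)/BU^{\alpha}(f(\mathbf p)) \le \alpha\beta$, and likewise for $SU$.

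I also need to check that strategyproofness transfers, i.e., that $f$ being SP (or GSP) for $\Gamma^1$ does not by itself let me conclude it is SP for $\Gamma^{\alpha}$ --- but actually the proposition only asserts the approximation bound, and implicitly assumes $f$ is already a valid (strategyproof) mechanism for $\Gamma^{\alpha}$; the content is purely the ratio comparison, so I do not need to argue anything about incentives. The one genuinely delicate point I should not gloss over is making sure the two-sided sandwich $d(x_i,\mathbf y)\le \min_j \alpha_j d(x_i,y_{\sigma_j}) \le \alpha\, d(x_i,\mathbf y)$ is valid: the lower bound uses $\alpha_j\ge 1$ for all $j$ so each term $\alpha_j d(x_i,y_{\sigma_j})\ge d(x_i,y_{\sigma_j})\ge \min_k d(x_i,y_{\sigma_k})$; the upper bound uses that the index $j^*$ achieving $\min_k d(x_i,y_{\sigma_k})$ satisfies $\alpha_{j^*} d(x_i,y_{\sigma_{j^*}})\le \alpha\, d(x_i,y_{\sigma_{j^*}})=\alpha\, d(x_i,\mathbf y)$, and the true minimum over $j$ is no larger. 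For utilities one argues symmetrically with $\tfrac1{\alpha_j}\le 1$ and $\tfrac1{\alpha_j}\ge\tfrac1\alpha$. I expect the main (minor) obstacle to be bookkeeping the direction of all four inequality chains correctly, especially keeping straight that for utilities the mechanism's value is multiplied down by at most $1/\alpha$ while the optimum is only helped, so the loss is still a factor $\alpha$; there is no deep difficulty here, and the whole argument is a short pointwise-monotonicity sandwich plus one invocation of the $\beta$-guarantee.
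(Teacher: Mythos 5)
Your proposal is correct and follows essentially the same route as the paper's proof: bound the mechanism's $\Gamma^{\alpha}$ value by $\alpha$ times (or $1/\alpha$ of) its $\Gamma^1$ value pointwise, invoke the $\beta$-guarantee for $\Gamma^1$, and use that the optimum only moves in the favorable direction as $\alpha$ grows (your explicit chain through $\mathbf y^*$ is just the paper's one-line ``the optimum is increasing with $\alpha$'' spelled out). The only blemish is a typographical one in the utility chain, where $\tfrac{1}{\alpha}\beta$ should read $\tfrac{1}{\alpha\beta}$; the intended argument is clearly right.
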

 \begin{proof}
  Denote by $\mathbf y=\langle y_1,y_2\rangle$ the output of mechanism $f$. Let $OPT^{\alpha}$ ($OPT^1$) be the optimal {value} for $\Gamma^{\alpha}$ ($\Gamma^1$) for the associated objective.
 For minimizing the cost objectives,  we have 
 {\small $$SC(\mathbf y)=\sum_{i\in N}c_i(\mathbf y)\le \alpha\sum_{i\in N}d(x_i,\mathbf y)\le \beta\alpha\cdot OPT^{1}\le \beta\alpha\cdot OPT^{\alpha},$$}
 where the second inequality follows because $f$  is $\beta$-approximate for $\Gamma^1$, and the last inequality follows because the optimum is increasing with $\alpha$.
  Similarly we also have $BC(\mathbf y)\le \beta\alpha\cdot OPT^{\alpha}$.
  For maximizing the utility objectives,  we have
{\small $$SU(\mathbf y)=\sum_{i\in N}u_i(\mathbf y)\ge \frac{\sum\limits_{i\in N}(1-d(x_i,\mathbf y))}{\alpha}\ge\frac{OPT^1}{\beta\alpha}\ge \frac{OPT^{\alpha}}{\beta\alpha},$$}
and similarly $BU(\mathbf y)\ge \frac{OPT^{\alpha}}{\beta\alpha}$.
 \end{proof}

 \medskip  In Section \ref{sec:pre}, we study the case when the agents can only strategically report the private preferences, and the locations are publicly known. In Section \ref{sec:loc}, we study the case where the agents can only strategically report the private locations, and the preferences are publicly known. In Section \ref{sec:ex}, we generalize our results to the setting where both locations and preferences can be misreported, and discuss multiple facilities ($m>2$). Further, we discuss an alternative additive model where an agent's cost/utility for a facility is adjusted by adding/subtracting a constant based on the agent's ordinal preference. All omitted proofs can be found in Supplementary Material.

\section{Unknown Preferences}\label{sec:pre}
In this section,
we consider the setting of unknown preferences and known locations {for $m=2$. That is, the preference information of each agent is private.}
In Section \ref{sec:maxcost} and \ref{sec:totalcost}, we study the objectives of minimizing the maximum cost and total cost of agents, respectively.
In Section \ref{sec:minutility} and \ref{sec:totalutility}, we study the objectives of maximizing the minimum utility and total utility, respectively.

 \subsection{Maximum Cost}\label{sec:maxcost}

Given a location profile $\mathbf x\in \mathbb R^n$, let $lt(\mathbf x)$ and $rt(\mathbf x)$ be the leftmost and rightmost location.
Define $cen(\mathbf x)=(lt(\mathbf x)+rt(\mathbf x))/2$. Let $lb(\mathbf x)=\max\{x_i:x_i\le cen(\mathbf x),i\in N\}$ be the closest location to $cen(\mathbf x)$ {on its left}, and $rb(\mathbf x)=\min\{x_i:x_i\ge cen(\mathbf x),i\in N\}$ be the closest location to $cen(\mathbf x)$ {on its right}. Denote $dist(\mathbf x)=\max\{lb(\mathbf x)-lt(\mathbf x),rt(\mathbf x)-rb(\mathbf x)\}$.
We consider the following mechanism, proposed in \cite{procaccia2009approximate}).

 \begin{mechanism}\label{mec:1}
 Locate $F_1$ at $y_1=\frac{lt(\mathbf x)+lb(\mathbf x)}{2}$, and locate $F_2$ at $y_2=\frac{rt(\mathbf x)+rb(\mathbf x)}{2}$.
 \end{mechanism}

 Consider the problem  $\Gamma^1$ with coefficient $\alpha=1$, i.e., the standard two-facility game, in which the cost of each agent is determined by the closer facility. Procaccia and Tennenholtz \cite{procaccia2009approximate} prove that the optimal maximum cost is at least $\frac{dist(\mathbf x)}{2}$, whereas the cost of each agent induced by Mechanism \ref{mec:1} is at most $\frac{dist(\mathbf x)}{2}$. Therefore, Mechanism \ref{mec:1} is optimal for $\Gamma^1$.

\begin{theorem}
For $\Gamma^{\alpha}$ with private preferences, Mechanism \ref{mec:1} is GSP and $\alpha$-approximate  under the maximum cost objective.
\end{theorem}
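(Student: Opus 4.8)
The plan is to handle the two assertions separately, since they are essentially independent. For group strategyproofness, the key observation is that Mechanism \ref{mec:1} computes $y_1$ and $y_2$ purely as a function of the location profile $\mathbf x$ (through $lt,lb,rt,rb$), with no dependence on the reported preference profile $\boldsymbol\sigma$. In this section the locations are public and only the preferences are private, so any agent's report — and likewise any coalition's joint report — can alter only the $\boldsymbol\sigma$-coordinates of the profile, which the mechanism ignores. Hence the output $\mathbf y$ is invariant under arbitrary misreports of preferences, so no agent's cost $c_i(\mathbf y)$ can change, and a fortiori no coalition member can be made strictly better off. This gives GSP in one or two sentences.

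For the $\alpha$-approximation, the plan is to invoke Proposition \ref{prop} with $\beta=1$. Recall that Procaccia and Tennenholtz \cite{procaccia2009approximate} showed that for $\Gamma^1$ every agent's cost under Mechanism \ref{mec:1} is at most $dist(\mathbf x)/2$, while $OPT_B\ge dist(\mathbf x)/2$; hence Mechanism \ref{mec:1} is optimal, i.e.\ $1$-approximate, for $\Gamma^1$ under the maximum cost objective. Proposition \ref{prop} then immediately upgrades this to $\alpha$-approximation for $\Gamma^{\alpha}$. For completeness one can also argue directly: for any profile $\mathbf p=(\mathbf x,\boldsymbol\sigma)$ and any agent $i$, the definition of the cost gives $c_i(\mathbf y)\le\alpha\, d(x_i,\mathbf y)\le \alpha\,\tfrac{dist(\mathbf x)}{2}$, while every agent's $\Gamma^{\alpha}$-cost dominates its $\Gamma^1$-cost, so $OPT_B^{\alpha}(\mathbf p)\ge OPT_B^{1}(\mathbf x)\ge \tfrac{dist(\mathbf x)}{2}$; dividing yields $BC(\mathbf y)/OPT_B^{\alpha}(\mathbf p)\le\alpha$.

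There is no genuinely hard step here: the whole point is that Mechanism \ref{mec:1} is preference-oblivious, so GSP comes for free and the approximation bound is a one-line consequence of Proposition \ref{prop} and the known optimality of the mechanism for $\Gamma^1$. The only place that warrants a moment's care is the monotonicity of the optimum in $\alpha$ (that moving from $\Gamma^1$ to $\Gamma^{\alpha}$ can only increase costs), which is already recorded in the proof of Proposition \ref{prop} and need not be re-derived.
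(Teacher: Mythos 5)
Your proposal is correct and matches the paper's proof: both establish GSP by noting that Mechanism \ref{mec:1} ignores the reported preferences, and both obtain the $\alpha$-approximation by combining the known optimality of the mechanism for $\Gamma^1$ (from Procaccia and Tennenholtz) with Proposition \ref{prop}. Your additional direct verification via $dist(\mathbf x)/2$ is a harmless elaboration of the same argument.
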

\begin{proof}
Since Mechanism \ref{mec:1} achieves the optimal maximum cost for $\Gamma^1$, the approximation ratio $\alpha$ for $\Gamma^{\alpha}$ is established from Proposition \ref{prop}. Note that the outcome of Mechanism \ref{mec:1} 
is independent of the preferences reported by agents. Thus, it is GSP.
\end{proof}

Next, we provide a lower bound on the approximation ratio of SP mechanisms.  We consider an instance with two agents at 0 and $\epsilon$ who prefer $F_1$, and two agents at 1 and $1-\epsilon$ who prefer $F_2$. The number $\epsilon$ is small such that the two agents on the left must be served by facility $F_1$. It can be shown that one of the two agents on the left has incentive to {misreport his true preference and move $F_1$ closer.}

  \medskip\begin{theorem}\label{thm:a1}
For $\Gamma^{\alpha}$ with private preferences, no SP mechanism has an approximation ratio less than {$\max\{1,\frac{\alpha}{2}\}$} under the maximum cost objective.
 \end{theorem}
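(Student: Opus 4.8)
The plan is to exhibit one four‑agent instance and a single preference misreport that no $r$‑approximate SP mechanism with $r<\alpha/2$ can survive. Since every approximation ratio is at least $1$, the bound is non‑trivial only for $\alpha>2$; so assume $\alpha>2$ and suppose for contradiction that $f$ is SP with ratio $r<\alpha/2$. Fix $\epsilon>0$ small (how small depends only on $\alpha$ and $r$; all estimates below hold once $\epsilon$ is below a constant threshold), and let $\mathbf p$ be the truthful profile with two agents at $0$ and $\epsilon$ preferring $F_1$ and two agents at $1-\epsilon$ and $1$ preferring $F_2$. Placing $F_1$ at $\epsilon/2$ and $F_2$ at $1-\epsilon/2$ gives every agent cost $\epsilon/2$, and since one facility cannot come within $<\epsilon/2$ of both $0$ and $\epsilon$ without abandoning the right pair, $OPT_B(\mathbf p)=\epsilon/2$.

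The first step is to pin down $f(\mathbf p)=\langle y_1,y_2\rangle$. An agent at $0$ or $\epsilon$ served by a facility near $1$ pays about $\alpha$, which exceeds $r\epsilon/2$; likewise an agent at $1-\epsilon$ or $1$ served near $0$. Working through the resulting short case analysis (this is where I use $r<\alpha$ and $\epsilon$ small), the two left agents must be served by $F_1$ and the two right agents by $F_2$; from $BC(f(\mathbf p))\le r\epsilon/2$ this forces $y_2\ge 1-r\epsilon/2$ and confines $y_1$ to a small window around $[0,\epsilon]$, with the two left agents' costs equal to $y_1$ and $|y_1-\epsilon|$. Since $\max\{y_1,|y_1-\epsilon|\}\ge\epsilon/2$, some left agent $j$ (at location $x_j\in\{0,\epsilon\}$) has $c_j(f(\mathbf p))\ge\epsilon/2$.

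Next comes the deviation: let $j$ report preference $F_2$, giving a profile $\mathbf q$ with one left agent still preferring $F_1$ and the other three preferring $F_2$. On $\mathbf q$ it is optimal to keep $F_2$ near the right pair and serve $j$ by $F_1$ (which, under its reported preference, is its less‑preferred facility), so $j$ pays an $\alpha$‑inflated cost; balancing $j$ against the remaining left $F_1$‑agent yields $OPT_B(\mathbf q)=\alpha\epsilon/(\alpha+1)$. I then pin down $f(\mathbf q)=\langle y_1',y_2'\rangle$ the same way: the agents at $1-\epsilon$ and $1$ again force $y_2'$ to within $O(\epsilon)$ of $1$ — the only escape would put $F_1$ within $r\epsilon/\alpha$ of both $1-\epsilon$ and $1$, which requires $r\ge(\alpha+1)/2$ and is ruled out by $r<\alpha/2$. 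With $y_2'$ near $1$, the bound $c_j(f(\mathbf q))\le r\cdot OPT_B(\mathbf q)=r\alpha\epsilon/(\alpha+1)$, computed under $j$'s \emph{reported} preference $F_2$, can be met only through the $F_1$ term, i.e. $\alpha\, d(x_j,y_1')\le r\alpha\epsilon/(\alpha+1)$, hence $d(x_j,y_1')\le r\epsilon/(\alpha+1)$.

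Finally, $j$'s \emph{true} cost at the outcome $f(\mathbf q)$ is computed with its true preference $F_1$, so it equals $\min\{d(x_j,y_1'),\alpha\, d(x_j,y_2')\}=d(x_j,y_1')\le r\epsilon/(\alpha+1)$ (the other term is about $\alpha/2$). Strategyproofness demands $c_j(f(\mathbf p))\le$ this quantity, i.e. $\epsilon/2\le r\epsilon/(\alpha+1)$, forcing $r\ge(\alpha+1)/2>\alpha/2$ and contradicting the choice of $r$. The crux — and the only delicate part — is the two pinning steps: one must verify that $r$‑approximation, for $\epsilon$ small and $r<\alpha/2$, genuinely excludes every facility configuration except ``$F_1$ on the left cluster, $F_2$ on the right cluster,'' including the degenerate ones in which a single facility tries to cover agents on both sides at the $\alpha$‑discounted rate. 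The punchline is that reporting one's less‑preferred facility inflates the reported cost by a factor $\alpha$ over the true cost, so the $\Theta(\epsilon)$ slack of an $r$‑approximation on the deviated instance becomes a $\Theta(\epsilon/\alpha)$ cap on the deviator's true cost, which is exactly what drives the $\alpha/2$ (indeed $(\alpha+1)/2$) lower bound.
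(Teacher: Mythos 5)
Your proposal is correct and follows essentially the same route as the paper's proof: the same four-agent instance $\mathbf x=(0,\epsilon,1-\epsilon,1)$ with preferences $(1,1,2,2)$, the same deviation in which the left agent with cost at least $\epsilon/2$ flips its reported preference to $F_2$, the same optimal value $\frac{\alpha\epsilon}{\alpha+1}$ for the deviated instance, and the same contradiction with strategyproofness (the paper merely splits into the two cases $y_1\le\epsilon/2$ and $y_1>\epsilon/2$ instead of naming the worse-off left agent $j$). Your closing observation that the inequality $\epsilon/2\le r\epsilon/(\alpha+1)$ actually forces $r\ge\frac{\alpha+1}{2}$ is a correct (and slightly stronger) reading of the same computation.
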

 \begin{proof}
 We prove the theorem by contradiction. Suppose there is an SP mechanism $f$ with approximation ratio $r<\max\{1,\frac{\alpha}{2}\}$. Consider a 4-agent instance $\Gamma(\mathbf x,\boldsymbol \sigma)$ with location profile $\mathbf x=(0,\epsilon,1-\epsilon,1)$ for a sufficiently small $\epsilon>0$, and {true/private} preference profile $\boldsymbol \sigma=(1,1,2,2)$. The optimal solution is $\langle \frac{\epsilon}{2},1-\frac{\epsilon}{2}\rangle$, and the optimal maximum cost is $\frac{\epsilon}{2}$. 
 Let $\mathbf y=\langle y_1,y_2\rangle$ be the output of $f$. Since the approximation ratio is $r$ and
 $\epsilon$ is sufficiently small, agents 1 and 2 (resp. 3 and 4) must be served by facility $F_1$ (resp. $F_2$).
 We discuss two cases $y_1\le \frac{\epsilon}{2}$ and $y_1> \frac{\epsilon}{2}$.

 \textbf{Case 1.} $y_1\le \frac{\epsilon}{2}$. The cost of agent 2 is $c_2(\mathbf y)\ge \frac{\epsilon}{2}$. Suppose agent 2 misreports his preferred facility as $F_2$, i.e.,  the preference profile becomes $\boldsymbol \sigma'=(1,2,2,2)$. Under the new instance $\Gamma(\mathbf x,\boldsymbol \sigma')$, an optimal solution  is $\langle \frac{\alpha\epsilon}{\alpha+1},\frac{1-\epsilon}{2}\rangle$, and the optimal maximum cost is $\frac{\alpha\epsilon}{\alpha+1}$.  Since the approximation ratio is $r$ and $\epsilon$ is sufficiently small, agent 2 must be served by facility $F_1$ (otherwise, agents 3 and 4 need to be served by $F_1$, and the maximum cost among them is at least $\frac{\alpha\epsilon}{2}$). By the approximation ratio, the cost of any agent under mechanism $f$ should be less than $\frac{\alpha^2\epsilon}{2\alpha+2}$, and the distance of agent 2 to $F_1$
 is less than $\frac{\alpha\epsilon}{2\alpha+2}<\frac{\epsilon}{2}\le c_2(\mathbf y)$.
 Now we look at the original instance $\Gamma(\mathbf x,\boldsymbol \sigma)$. By misreporting his preference, agent 2 can decrease his distance to $F_1$, and thus decrease his cost, which contradicts the strategyproofness.

  \textbf{Case 2.} $y_1> \frac{\epsilon}{2}$. The cost of agent 1 is $c_1(\mathbf y)> \frac{\epsilon}{2}$. Suppose agent 1 misreports his preferred facility as $F_2$, i.e.,  the preference profile becomes $\boldsymbol \sigma''=(2,1,2,2)$. An optimal solution for the new instance $\Gamma(\mathbf x,\boldsymbol \sigma'')$ is $\langle \frac{\epsilon}{\alpha+1},\frac{1-\epsilon}{2}\rangle$, 
  and the optimal maximum cost is $\frac{\alpha\epsilon}{\alpha+1}$. For the same reason in Case 1, agent 1 must be served by facility $F_1$. By the approximation ratio, the distance of agent 1 to $F_1$ in mechanism $f$ should be less than $\frac{\alpha\epsilon}{2\alpha+2}<\frac{\epsilon}{2}<c_1(\mathbf y)$.
 Considering the original instance $\Gamma(\mathbf x,\boldsymbol \sigma)$, agent 1 can decrease his cost by misreporting, which contradicts the strategyproofness.
 \end{proof}

 \subsection{Total Cost}\label{sec:totalcost}

We first consider the problem $\Gamma^1$ {(i.e., $\alpha=1$)} where each agent is indifferent of the two facilities. 
To minimize the total cost in $\Gamma^1$ (e.g., the total distance of agents to their closer facility),
 one can compute an optimal solution  for $\Gamma^1$ in $O(n^2)$ time \cite{procaccia2009approximate}: for $i=2,\ldots,n-1$, denote by $y_{1i}$ the median of $x_1,\ldots,x_i$, and by $y_{2i}$ the median of $x_{i+1},\ldots,x_n$; return {the solution with smallest total cost}, among the $n-1$ solutions $\langle y_{1i},y_{2i}\rangle$.
This mechanism provides an $\alpha$-approximate solution for problem $\Gamma^{\alpha}$ (Proposition \ref{prop}). It is clearly GSP, since its output is based only on the agents' locations and ignores the reported preferences.  

\begin{theorem}
For $\Gamma^{\alpha}$ with private preferences, there exists a GSP mechanism with approximation ratio $\alpha$ under the total cost objective.
\end{theorem}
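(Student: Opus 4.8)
The plan is to analyze precisely the mechanism described immediately before the statement: for each split point $i\in\{2,\ldots,n-1\}$, place $F_1$ at the median $y_{1i}$ of $x_1,\ldots,x_i$ and $F_2$ at the median $y_{2i}$ of $x_{i+1},\ldots,x_n$, and among the $n-1$ candidate pairs $\langle y_{1i},y_{2i}\rangle$ output the one minimizing the $\Gamma^1$ total distance $\sum_{i\in N}d(x_i,\langle y_{1i},y_{2i}\rangle)$, breaking ties by smallest index. First I would recall from \cite{procaccia2009approximate} why this returns an optimal solution for $\Gamma^1$: since the two facility locations partition $N$ into those served by $F_1$ and those served by $F_2$, and the agents on a line can always be relabeled so that an optimal partition is a prefix/suffix split, it suffices to enumerate the $n-1$ prefix/suffix partitions; for a fixed partition the $1$-median of each part minimizes the within-part total distance, so the best of the $n-1$ candidates is globally optimal. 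Hence the mechanism is $1$-approximate for $\Gamma^1$ under the total cost objective.

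Next I would invoke Proposition \ref{prop} with $\beta=1$: a mechanism that is $1$-approximate for $\Gamma^1$ is $\alpha$-approximate for $\Gamma^{\alpha}$ under the total cost objective. This immediately yields the claimed approximation ratio, so no further estimation is needed beyond what Proposition \ref{prop} already supplies.

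Finally, for group strategyproofness I would observe that every quantity used by the mechanism --- the split points, the medians $y_{1i},y_{2i}$, the comparison value $\sum_{i\in N}d(x_i,\langle y_{1i},y_{2i}\rangle)$, and the tie-break --- is a function of the location profile $\mathbf x$ alone, which is public in this setting. Consequently the output $f(\mathbf p)$ does not depend on the reported preference profile $\boldsymbol\sigma$ at all. Since in the private-preferences model an agent (or a coalition) can only misreport preferences, no joint deviation changes the outcome, hence no agent's cost $c_i(\cdot)$ changes, and the mechanism is GSP. The only point that needs care is to make the selection among possibly several $\Gamma^1$-optimal candidates itself preference-independent (a preference-dependent selection rule could in principle open a manipulation); this is handled by the fixed, $\mathbf x$-only tie-break above. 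I do not expect any real obstacle here: the substance of the argument is the clean reduction through Proposition \ref{prop}, and GSP is immediate once the output is seen to ignore $\boldsymbol\sigma$.
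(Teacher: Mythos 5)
Your proposal is correct and follows essentially the same route as the paper: enumerate the $n-1$ prefix/suffix splits with per-part medians to get the $\Gamma^1$-optimal solution of \cite{procaccia2009approximate}, apply Proposition \ref{prop} with $\beta=1$, and observe that the output ignores the reported preferences, hence GSP. Your added care about a preference-independent tie-break among $\Gamma^1$-optimal candidates is a sensible clarification of what the paper leaves implicit, but it does not change the argument.
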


Next, we prove a lower bound for SP mechanisms, via a similar analysis as in the proof of Theorem \ref{thm:a1}.

\begin{theorem}\label{thm:kk}
For $\Gamma^{\alpha}$ with private preferences, no SP mechanism  has an approximation ratio less than $\max\{1,\frac{\alpha}{4}\}$ under the total cost objective.
 \end{theorem}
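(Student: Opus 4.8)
The plan is to follow the template of the proof of Theorem~\ref{thm:a1}, but to track the total cost instead of the maximum cost. Since $\max\{1,\frac{\alpha}{4}\}=1$ whenever $\alpha\le 4$ and no mechanism can beat ratio $1$, only the regime $\alpha>4$ needs an argument. Suppose, for contradiction, that $f$ is SP with approximation ratio $r<\frac{\alpha}{4}$. I would reuse the $4$-agent instance $\Gamma(\mathbf x,\boldsymbol \sigma)$ with $\mathbf x=(0,\epsilon,1-\epsilon,1)$ for a sufficiently small $\epsilon>0$ and true preference profile $\boldsymbol \sigma=(1,1,2,2)$. Here $OPT_S(\mathbf x,\boldsymbol \sigma)=2\epsilon$ (place $F_1$ anywhere in $[0,\epsilon]$ and $F_2$ anywhere in $[1-\epsilon,1]$; this is optimal because every agent $i$ pays at least $d(x_i,\mathbf y)$), so the output $\mathbf y=\langle y_1,y_2\rangle$ has total cost at most $2r\epsilon<\frac{\alpha}{2}\epsilon$.

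The first step is a ``localization'' claim that I would prove once and reuse for all instances below: for $\epsilon$ small enough (the threshold depending only on $\alpha$ and $r$), any facility placement of total cost below $\frac{\alpha}{2}\epsilon$ serves the two left agents (at $0$ and $\epsilon$) by $F_1$ and the two right agents (at $1-\epsilon$ and $1$) by $F_2$, with $y_1\in[0,2r\epsilon]$ and $y_2\in[1-2r\epsilon,1]$. Indeed, the facility serving the agent at $0$ lies within $2r\epsilon$ of $0$ and the one serving the agent at $1$ lies within $2r\epsilon$ of $1$ (so they are distinct for small $\epsilon$); the agent at $\epsilon$ (resp.\ $1-\epsilon$) cannot be served by the far one, since its cost would then be roughly $1$; and $F_1$ cannot be the right-hand facility, for otherwise the two right agents, whose first choice $F_2$ is then roughly $1$ away, are served by $F_1$ at weight $\alpha$, so $c_3(\mathbf y)+c_4(\mathbf y)\ge \alpha\big(d(1-\epsilon,y_1)+d(1,y_1)\big)\ge\alpha\epsilon>\frac{\alpha}{2}\epsilon$. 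In particular, in $\mathbf y$ we have $c_1(\mathbf y)=d(0,y_1)=y_1$ and, when $y_1<\epsilon$, $c_2(\mathbf y)=\epsilon-y_1$.

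Now split on the position of $y_1$. If $y_1\le\epsilon/2$, then $c_2(\mathbf y)\ge\epsilon/2$; let agent $2$ misreport his preference as $F_2$, giving $\boldsymbol \sigma'=(1,2,2,2)$. One checks $OPT_S(\mathbf x,\boldsymbol \sigma')\le 2\epsilon$ via the solution $\langle\epsilon,1-\epsilon\rangle$, so $\mathbf y'=f(\mathbf x,\boldsymbol \sigma')$ has total cost below $2r\epsilon$. By the localization claim, $F_1=y_1'$ serves agents $1$ and $2$; since agent $2$'s first choice $F_2$ is roughly $1$ away while $c_2(\mathbf y')<2r\epsilon$, agent $2$ is served by his second choice $F_1$, i.e.\ $\alpha\, d(\epsilon,y_1')<2r\epsilon$, hence $d(\epsilon,y_1')<2r\epsilon/\alpha<\epsilon/2\le c_2(\mathbf y)$. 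Returning to the true instance $\Gamma(\mathbf x,\boldsymbol \sigma)$: agent $2$, whose true top choice is $F_1$, obtains cost at most $d(\epsilon,y_1')<c_2(\mathbf y)$ by reporting $\boldsymbol \sigma'$, contradicting strategyproofness. If instead $y_1>\epsilon/2$, then $c_1(\mathbf y)=y_1>\epsilon/2$; the symmetric argument has agent $1$ misreport as $F_2$, giving $\boldsymbol \sigma''=(2,1,2,2)$ with $OPT_S(\mathbf x,\boldsymbol \sigma'')\le 2\epsilon$ witnessed by $\langle 0,1-\epsilon\rangle$, and localization again forces agent $1$ to be served by his second choice $F_1$, so $d(0,y_1'')<2r\epsilon/\alpha<\epsilon/2<c_1(\mathbf y)$, another contradiction.

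The main obstacle is the localization claim: ruling out, uniformly over $\boldsymbol \sigma,\boldsymbol \sigma',\boldsymbol \sigma''$, every ``cross'' assignment — one facility serving three agents, $F_1$ drifting to the right, the far facility serving a near agent — while keeping careful track of which of $d(\cdot,\cdot)$ and $\alpha\, d(\cdot,\cdot)$ governs each agent's cost, and doing so with a single threshold for ``$\epsilon$ sufficiently small'' that depends only on $\alpha$. Once that bookkeeping is settled the deviation arguments are immediate, and the hypothesis $r<\frac{\alpha}{4}$ is precisely the inequality that makes $2r\epsilon/\alpha<\epsilon/2$.
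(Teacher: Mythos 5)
Your proof is correct and follows essentially the same route as the paper's: the same four-agent instance $\mathbf x=(0,\epsilon,1-\epsilon,1)$ with $\boldsymbol\sigma=(1,1,2,2)$, the same case split on $y_1\lessgtr\epsilon/2$, the same misreports by agents $2$ and $1$, and the same crux inequality $2r\epsilon/\alpha<\epsilon/2\iff r<\alpha/4$. You merely spell out the ``localization'' step in more detail than the paper does, and use the equally valid witness $\langle\epsilon,1-\epsilon\rangle$ in place of $\langle\epsilon,1\rangle$.
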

 \begin{proof}
 We prove the theorem by contradiction. Suppose there is an SP mechanism $f$ with approximation ratio $r<\max\{1,\frac{\alpha}{4}\}$. Consider a 4-agent instance $\Gamma(\mathbf x,\boldsymbol \sigma)$ with location profile $\mathbf x=(0,\epsilon,1-\epsilon,1)$ for a sufficiently small $\epsilon>0$, and preference profile $\boldsymbol \sigma=(1,1,2,2)$.  The optimal total cost is $2\epsilon$, attained by a solution $\langle 0,1\rangle$. 
 Let $\mathbf y=\langle y_1,y_2\rangle$ be the output of $f$. Since the approximation ratio is $r$ and $\epsilon$ is sufficiently small, agents 1 and 2 (resp. 3 and 4) must be served by facility $F_1$ (resp. $F_2$). 
 We discuss two cases $y_1\le \frac{\epsilon}{2}$ and $y_1> \frac{\epsilon}{2}$.

 \textbf{Case 1.} $y_1\le \frac{\epsilon}{2}$. The cost of agent 2 is $c_2(\mathbf y)\ge \frac{\epsilon}{2}$. Suppose agent 2 misreports his preferred facility as $F_2$, i.e.,  the preference profile becomes $\boldsymbol \sigma'=(1,2,2,2)$. An optimal solution for the new instance $\Gamma(\mathbf x,\boldsymbol \sigma')$ is $\langle \epsilon,1\rangle$, and the optimal total cost is $2\epsilon$.  Since the approximation ratio is $r$ and $\epsilon$ is sufficiently small, agent 2 must be served by facility $F_1$ (otherwise, agents 3 and 4 are served by $F_1$, and the total cost is at least $\alpha\epsilon$, contradicting the approximation ratio). By the approximation ratio, the cost of agent 2 under mechanism $f$ is {strictly} less than $\frac{\alpha\epsilon}{2}$,
 and the distance of agent 2 to $F_1$  is less than $\frac{\epsilon}{2}\le c_2(\mathbf y)$.
 Now we look at the original instance $\Gamma(\mathbf x,\boldsymbol \sigma)$. By misreporting his preference, agent 2 can decrease his distance to $F_1$, and thus decrease his cost, which contradicts the strategyproofness. 

  \textbf{Case 2.} $y_1> \frac{\epsilon}{2}$. The cost of agent 1 is $c_1(\mathbf y)> \frac{\epsilon}{2}$. Suppose agent 1 misreports his preferred facility as $F_2$, i.e.,  the preference profile becomes $\boldsymbol \sigma''=(2,1,2,2)$. An optimal solution for the instance $\Gamma(\mathbf x,\boldsymbol \sigma'')$ is $\langle 0,1\rangle$, and the optimal total cost is $2\epsilon$. Also, agent 1 must be served by facility $F_1$. By the approximation ratio, the cost of agent 1 under mechanism $f$ is less than $\frac{\alpha\epsilon}{2}$, and the distance of agent 1 to $F_1$  is less than $\frac{\epsilon}{2}<c_1(\mathbf y)$.
 Considering the original instance $\Gamma(\mathbf x,\boldsymbol \sigma)$, agent 1 can decrease his cost by misreporting, which contradicts the strategyproofness.
 \end{proof}

 \subsection{Minimum Utility} \label{sec:minutility}
In Section \ref{sec:maxcost} we have shown that Mechanism \ref{mec:1} is GSP and achieves the optimal maximum cost for the problem $\Gamma^1$. It is easy to see that it also achieves the optimal minimum utility for $\Gamma^1$. By Proposition \ref{prop}, it is an $\alpha$-approximate mechanism for $\Gamma^{\alpha}$.
Below, we present the following mechanism that locates facility $F_i$ at the midpoint of the locations of {the leftmost and rightmost} agents who prefer $F_i$, for $i=1,2$,
and obtains the optimal minimum utility for $\alpha \ge 2$.


\begin{mechanism}\label{mec:2}
Let $\mathbf x_1=(x_i)_{i\in N:\boldsymbol \sigma_i=1}$ and $\mathbf x_2=(x_i)_{i\in N:\boldsymbol \sigma_i=2}$ be the location profile of agents who prefer $F_1$ and $F_2$, respectively.  Return the solution
$$\langle \frac{lt(\mathbf x_1)+rt(\mathbf x_1)}{2},\frac{lt(\mathbf x_2)+rt(\mathbf x_2)}{2}\rangle.$$
 \end{mechanism}

Next we analyze its performance.

 \begin{theorem}
 For $\Gamma^{\alpha}$ with private preferences, Mechanism \ref{mec:1} is GSP and $\alpha$-approximate under the minimum utility objective. When $\alpha\ge 2$, Mechanism \ref{mec:2} is an optimal SP mechanism.
 \end{theorem}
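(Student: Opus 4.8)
The plan is to treat the two mechanisms separately. For Mechanism~\ref{mec:1}, GSP is immediate: its output is a function of $\mathbf x$ alone and ignores the reported preferences, so no agent or coalition can change the outcome by lying about a preference. For the ratio, I would observe that in $\Gamma^1$ the utility of agent $i$ is $u_i(\mathbf z)=1-d(x_i,\mathbf z)$, hence $BU(\mathbf z)=1-BC(\mathbf z)$, so maximizing the minimum utility in $\Gamma^1$ is exactly minimizing the maximum cost; since Mechanism~\ref{mec:1} is optimal for the latter (recalled in Section~\ref{sec:maxcost}), it is $1$-approximate for $\Gamma^1$ under the minimum utility objective, and Proposition~\ref{prop} bootstraps this to an $\alpha$-approximation for $\Gamma^\alpha$.

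For Mechanism~\ref{mec:2} with $\alpha\ge 2$ I first establish optimality. Write $W_j=rt(\mathbf x_j)-lt(\mathbf x_j)$ for the spread of the group preferring $F_j$ and $W=\max\{W_1,W_2\}$, noting $W\le 1$. Since $y_j$ is the midpoint of $\mathbf x_j$ and every $F_j$-agent sits in $[lt(\mathbf x_j),rt(\mathbf x_j)]$, that agent is within $W_j/2$ of $y_j$, so its utility is at least $1-W_j/2\ge 1-W/2$; hence the mechanism's minimum utility is at least $1-W/2$. For the matching upper bound on \emph{any} solution $\mathbf z=\langle z_1,z_2\rangle$, I take the group with the larger spread, say the $F_j$-group, and its extreme agents $\ell$ (at $lt$) and $r$ (at $rt$); by the triangle inequality $d(x_\ell,z_j)+d(x_r,z_j)\ge W$, so one of them, say $r$, has $d(x_r,z_j)\ge W/2$, and then $u_r(\mathbf z)\le\max\{1-W/2,\ 1/\alpha\}=1-W/2$, where the equality uses $1/\alpha\le 1/2\le 1-W/2$. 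Thus $1-W/2$ is the optimal minimum utility and Mechanism~\ref{mec:2} attains it.

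Next I would verify strategyproofness of Mechanism~\ref{mec:2}. Fix an agent $i$ with true preference $F_j$. Because $y_j$ is the midpoint of $\mathbf x_j\ni x_i$ and $W_j\le 1$, we have $d(x_i,y_j)\le 1/2$, so $i$'s truthful utility is at least $1/2\ge 1/\alpha$. If $i$ instead reports $F_{3-j}$, producing $\mathbf y'$, then deleting $x_i$ from $\mathbf x_j$ either leaves $lt(\mathbf x_j),rt(\mathbf x_j)$ unchanged (so $y'_j=y_j$) or removes the endpoint on whichever side $x_i$ lies, pushing the new midpoint $y'_j$ weakly \emph{away} from $x_i$; in every case $d(x_i,y'_j)\ge d(x_i,y_j)$, so the $F_j$-term of $u_i(\mathbf y')$ is at most $i$'s truthful utility. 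The $F_{3-j}$-term of $u_i(\mathbf y')$ is at most $1/\alpha\le 1/2\le$ the truthful utility. Hence $u_i(\mathbf y')$ does not exceed $u_i(\mathbf y)$, so the mechanism is SP; the degenerate case $\mathbf x_j=\{x_i\}$ is trivial since then $u_i(\mathbf y)=1$.

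The main obstacle is the strategyproofness step, and specifically the interplay it relies on: a truthful agent's preferred facility is always within distance $1/2$ of its location (a structural consequence of the midpoint rule), while the only way a misreport could help---dragging the \emph{other} facility toward the agent---is neutralized by the $1/\alpha\le 1/2$ discount precisely when $\alpha\ge 2$. Care is also needed with empty or singleton preference groups, where Mechanism~\ref{mec:2}'s placement must be given a convention but the incentive and optimality claims hold trivially; I would dispatch these cases first so that they do not clutter the main argument.
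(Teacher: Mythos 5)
Your proposal is correct and follows essentially the same route as the paper: Mechanism~\ref{mec:1}'s ratio via its optimality for $\Gamma^1$ plus Proposition~\ref{prop}, and for Mechanism~\ref{mec:2} the observation that truthful utility is at least $\frac12$ while any gain from the other facility is capped at $\frac1\alpha\le\frac12$. Your optimality argument (the $1-W/2$ bound via the triangle inequality on the wider group's extreme agents) is a more explicit version of the paper's brief justification, and your handling of the empty/singleton group cases is a welcome extra care, but the underlying ideas are the same.
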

 \begin{proof}
The approximation ratio of Mechanism \ref{mec:1} is given by Proposition \ref{prop}. We only consider the case when $\alpha\ge 2$.
We first show that Mechanism 2 {(denoted by $f$)} is SP. Consider agent $i\in N$, and assume w.l.o.g. that he prefers $F_1$.
Under the location profile $\mathbf y=\langle y_1,y_2\rangle$ induced by truthful reporting, $i$'s distance to $F_1$ is  $d(x_i,y_1)\le\frac12$,
and $i$'s utility is $\max\{1-d(x_i,y_1),\frac{1-d(x_i,y_2)}{\alpha}\}=1-d(x_i,y_1)\ge \frac12$.
If agent $i$ misreports $F_2$, $F_1$ cannot be closer to $i$, and the only possibility of improvement is that $i$'s utility is determined by $F_2$.
However, in that case, his utility is at most $\frac{1}{\alpha}\le\frac12$. Therefore, agent $i$ cannot benefit by lying.   

{For the optimality, under the outcome of $f$, each agent has a utility at least $\frac12$, and does not want to be served by the less preferred facility as $\alpha\ge2$. So the midpoint $\frac{lt(\mathbf x_1)+rt(\mathbf x_1)}{2}$ (or $\frac{lt(\mathbf x_2)+rt(\mathbf x_2)}{2}$) is the best for agents who prefer $F_1$ (or $F_2$), otherwise some agent's utility will decrease.}
 \end{proof}

Mechanism \ref{mec:2} is no longer SP when $\alpha<2$,
as the agent may have a utility determined by the less preferred facility, and yield a closer distance to it by misreporting.
To complement our result above, we derive the following lower bound. 

 \begin{theorem}\label{thm:36}
For $\Gamma^{\alpha}$ with private preferences, when $1<\alpha<2$, no SP mechanism has an approximation ratio better than $\min\{\frac{2}{\alpha},\frac{3\alpha+1}{2\alpha+2}\}$ under the minimum utility objective.
 \end{theorem}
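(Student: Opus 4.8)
The proof will follow the contradiction-plus-single-agent-deviation template of Theorems~\ref{thm:a1} and~\ref{thm:kk}. Suppose $f$ is SP with approximation ratio $r<\min\{\frac{2}{\alpha},\frac{3\alpha+1}{2\alpha+2}\}$. The first step is to isolate the two numerical facts that do the work. Since $r<\frac{2}{\alpha}$ we have $r\alpha<2$; hence whenever an instance has optimal minimum utility equal to $\frac1\alpha$ --- which is the value of the optimum for two agents at $0$ and $1$ sharing one preference, since such agents can be matched to the two facilities with utilities $1$ and $\frac1\alpha$ --- $f$ still has to give every agent utility strictly above $\frac12$. Since $r<\frac{3\alpha+1}{2\alpha+2}$ and $\frac{3\alpha+1}{2\alpha+2}<\alpha$ for every $\alpha>1$ (this reduces to $(2\alpha+1)(\alpha-1)>0$), we also have $r<\alpha$; hence in any instance whose optimal minimum utility is $1$, the discounted second-choice term, being at most $\frac1\alpha<\frac1r$, can never certify a utility of $\frac1r$, so the approximation guarantee forces $f$ to place each facility within distance $1-\frac1r$ of every agent whose reported first choice it is. These two observations let us pin down the locations of both facilities on each instance appearing below.

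Next I would work with a constant-size instance on the segment, using locations $0$, one or two interior points, and $1$, together with a ``split'' preference profile (some agents preferring $F_1$, the rest preferring $F_2$), chosen so that the optimal minimum utility is bounded away from $1$, and I would single out one agent $i$ whose true first choice is $F_1$ and whose position is such that, by the localization above, $i$ must be served by $F_1$ and hence has an explicitly computable utility under $f$. I would then have $i$ flip her report to $F_2$ and compute the optimal minimum utility of the resulting instance. The crucial algebraic point, exactly as in Theorem~\ref{thm:a1}, is that after the flip agent $i$ competes for $F_2$ against the $F_2$-crowd near the far endpoint and is outvoted, so the best any solution can do for her is the first-choice route discounted by $\alpha$; balancing her demand against that of the $F_2$-crowd is what introduces the factor $\alpha+1$, and chasing it through the approximation inequality lands one on $\frac{3\alpha+1}{2\alpha+2}$. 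The coarser factor $\frac2\alpha$ comes instead from the ``$\frac12$ versus $\frac1\alpha$'' gap of the previous paragraph and becomes the binding obstruction as $\alpha\to2$.

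Finally I would split on where $f$ places $F_1$ in the original instance --- roughly, whether it sits on agent $i$'s side or on the opposite side (the analogue of the $y_1\le\frac{\epsilon}{2}$ versus $y_1>\frac{\epsilon}{2}$ dichotomy in the earlier proofs) --- and argue that in each case some agent (either $i$ or her mirror image) gets, under the original output, a utility no larger than the threshold produced in the previous step, while on the flipped instance the larger optimum and the approximation guarantee force $f$ to bring a facility close enough to that agent that her \emph{true} utility strictly exceeds this threshold; this contradicts strategyproofness. The main difficulty, and the reason the bound is the \emph{minimum} of the two expressions, is making that last inequality strict simultaneously in all cases: the upper bound extracted from the original output and the lower bound forced on the flipped output must genuinely straddle each other, and it is exactly this comparison that needs $r<\frac2\alpha$ in the ``same-side'' case and $r<\frac{3\alpha+1}{2\alpha+2}$ in the ``opposite-side'' case.
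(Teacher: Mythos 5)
Your high-level plan (contradiction, a single agent flipping her reported preference, two incompatible localizations of $F_1$, and the observation that $r<\frac{2}{\alpha}$ and $r<\frac{3\alpha+1}{2\alpha+2}$ are invoked at different steps) does match the structure of the paper's argument. But the proposal stops exactly where the proof has to start: for a lower bound equal to a specific expression, the choice of instance and the resulting algebra \emph{are} the proof, and neither appears here. The paper uses $\mathbf x=(0,0,x,1)$ with $x=1-\frac{1}{\alpha}$ and $\boldsymbol\sigma=(1,1,1,2)$; the interior point $x=1-\frac1\alpha$ is the tuned parameter. Truthful reporting plus the ratio force $y_1\ge\frac{1-x/2}{r}+x-1$ (so that agent 3 at $x$ stays close enough to $F_1$), while after agent 1 flips to $F_2$ the new optimum drops to $\frac1\alpha$ and the ratio forces $y_1'\le 1-\frac1r$; the strategyproofness contradiction is exactly the inequality $1-\frac1r<\frac{1-x/2}{r}+x-1$, which with $x=1-\frac1\alpha$ is equivalent to $r<\frac{3\alpha+1}{2\alpha+2}$. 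Your sentence ``chasing it through the approximation inequality lands one on $\frac{3\alpha+1}{2\alpha+2}$'' asserts this computation rather than performing it, and without exhibiting the instance there is no way to check that any threshold, let alone this particular one, actually emerges.

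Two further points. First, your ``localization'' fact is stated only for instances whose optimal minimum utility equals $1$, but the instances you then propose to use have optimum bounded away from $1$ (the paper's first instance has optimum $1-\frac{x}{2}=\frac{\alpha+1}{2\alpha}$), so the fact as stated never applies; the localization must be rederived per instance from ``second-choice utility $\le\frac1\alpha<\mathrm{OPT}/r$.'' Second, your account of where the factor $\alpha+1$ enters (``balancing agent $i$'s demand against the $F_2$-crowd'') is the mechanism behind Theorem~\ref{thm:a1}, not this one: here the flipped instance's optimum is simply $\frac1\alpha$ attained at $\langle0,1\rangle$, and the $2\alpha+2$ in the denominator falls out of comparing the two localizations of $y_1$. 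Both slips suggest the instance was never actually constructed, which is the substantive gap.
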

 \begin{proof}
 When $\alpha<2$, suppose, for contradiction, that $f$ is an SP mechanism with an approximation ratio $r<\min\{\frac{2}{\alpha},\frac{3\alpha+1}{2\alpha+2}\}$.
 Consider an instance $\Gamma(\mathbf x,\boldsymbol \sigma)$ with $\mathbf x=(0,0,x=1-\frac{1}{\alpha},1)$ and $\boldsymbol \sigma=(1,1,1,2)$. 
 Set $x=1-\frac{1}{\alpha}<\frac12$. It is easy to see that, since the left three agents prefer $F_1$, an
optimal solution must locate $F_1$ at $y_1=\frac{x}{2}$, and has a minimum utility of $1-\frac{x}{2}$.
 Let $\mathbf y=\langle y_1,y_2\rangle$ be the output of $f$. By the approximation ratio, $f$ induces a minimum utility at least $\frac{1-x/2}{r}$. If one of the left three agents has his/her utility
determined by $F_2$, then his/her utility is at most $\frac{1}{\alpha}<\frac{1-x/2}{r}$, a contradiction.
 So the utilities of the left three agents should
be determined by $F_1$. It indicates that $y_1\ge \frac{1-x/2}{r}+x-1$, since the distance between agent 3 and $F_1$ should be at most $1-\frac{1-x/2}{r}$.

 Consider another instance $\Gamma(\mathbf x,\boldsymbol \sigma')$ with preference profile $\boldsymbol \sigma=(2,1,1,2)$.  An optimal solution $\langle0,1\rangle$ has a minimum utility of $\frac{1}{\alpha}$. Mechanism $f$ induces a minimum utility at least $\frac{1}{\alpha \cdot r}$.  If the utility of agent 4 is determined by F1, then
the utilities of agents 2 and 3 are determined by F2, and the
minimum utility of them is at most $\frac{1-x/2}{\alpha}<\frac{1}{\alpha\cdot r}$, a contradiction.
So the utility of agent 4 is determined by $F_2$. 
 Let $\mathbf y'=\langle y_1',y_2'\rangle$ be the output of $f$. We discuss the following two cases with respect to the utility of agent 1 under $\mathbf y'$.

 \noindent\textbf{Case 1.} \emph{Agent 1's utility is determined by $F_2$.} To guarantee the approximation ratio, $F_2$ needs to serve agent 1, 2 and 3.
 Then the minimum utility of agents 2 and 3 induced by $f$ is at most  $\frac{1-x/2}{\alpha}<\frac{1}{\alpha\cdot r}$, a contradiction with the approximation ratio $r$.

 \noindent\textbf{Case 2.} \emph{Agent 1's utility is determined by $F_1$.}
 Agent 1's utility is $\frac{1-y_1'}{\alpha}$. To satisfy $\frac{1-y_1'}{\alpha}\ge \frac{1}{\alpha\cdot r}$, it must have $y_1'\le 1-\frac1r$.

 Now we move back to the original instance $\Gamma(\mathbf x,\boldsymbol \sigma)$. If agent 1 truthfully report $F_1$  as his/her top choice, mechanism $f$ locates $F_1$ at $y_1\ge \frac{1-x/2}{r}+x-1$. However, if he/she misreports $F_2$, $f$ would locate $F_1$ at $y'_1\le 1- \frac1r <\frac{1-x/2}{r}+x-1$, reducing the distance between agent 1 and $F_1$. Hence, agent 1 has incentive to lie, and $f$ is not SP.
 \end{proof}

 \subsection{Total Utility} \label{sec:totalutility}
For maximizing the total utility of the agents, we present the following fixed mechanism that always locates the two facilities at the midpoint of the interval [0,1].

  \begin{mechanism}\label{mec:3}
  Locate $F_1$ and $F_2$ both at $\frac12$.
  \end{mechanism}

 \begin{theorem}\label{thm:with}
For $\Gamma^{\alpha}$ with private preferences, Mechanism \ref{mec:3} is GSP and 2-approximate under the total utility objective.
 \end{theorem}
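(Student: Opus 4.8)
The plan is to dispatch the two assertions separately; both are short because Mechanism~\ref{mec:3} is a \emph{fixed} mechanism.

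\textbf{Group strategyproofness.} Mechanism~\ref{mec:3} outputs $\langle\tfrac12,\tfrac12\rangle$ on every profile, independently of the reported preferences (indeed of anything the agents report). Hence for any coalition $S\subseteq N$ and any joint deviation, the induced facility locations, and therefore the utility of every agent, are unchanged; no member of $S$ can be made strictly better off, so the mechanism is GSP (and a fortiori SP).

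\textbf{The $2$-approximation.} First I would compute the utility the mechanism gives each agent. With $y_1=y_2=\tfrac12$ we have $d(x_i,y_{\sigma_1})=d(x_i,y_{\sigma_2})=|x_i-\tfrac12|$ for every $i\in N$, and since $\alpha_1=1\le\alpha$ the maximum in the definition of $u_i$ is attained at the top-choice term, so $u_i(\langle\tfrac12,\tfrac12\rangle)=1-|x_i-\tfrac12|\ge\tfrac12$ because $x_i\in[0,1]$. Summing, $SU(f(\mathbf p))=\sum_{i\in N}\bigl(1-|x_i-\tfrac12|\bigr)\ge \tfrac n2$. For the optimum, note that for \emph{any} location profile $\mathbf y$ and any agent $i$ every term $\frac{1-d(x_i,y_{\sigma_k})}{\alpha_k}$ is at most $1$ (distances are nonnegative and each $\alpha_k\ge1$), hence $u_i(\mathbf y)\le1$ and the optimal total utility is at most $n$. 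Combining the two bounds gives an approximation ratio of at most $n/(n/2)=2$.

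There is no genuinely hard step here; the only points needing care are bookkeeping: that utilities always lie in $[0,1]$ (so the per-agent bound $u_i\le1$ is legitimate), and that $\alpha_k\ge1$ is precisely what makes the top-choice term dominate under Mechanism~\ref{mec:3}, yielding the clean lower bound $u_i\ge\tfrac12$. One may add that the constant $2$ is not claimed tight (the corresponding lower bound in Table~\ref{tab:1} is close to $1$), and that for $\alpha<2$ the better ratio $\alpha$ already follows from Proposition~\ref{prop}, since with public locations any optimal mechanism for $\Gamma^1$ is (group) strategyproof.
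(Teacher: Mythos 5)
Your proof is correct and follows essentially the same route as the paper's: the mechanism is trivially GSP because its output is fixed, each agent is within distance $\tfrac12$ of $\tfrac12$ and hence has utility at least $\tfrac12$, and the optimum is at most $n$, giving the ratio $2$. The extra bookkeeping (top-choice term dominating since $\alpha_k\ge1$) and the closing remark about the $\min\{2,\alpha\}$ improvement just make explicit what the paper states in its proof and subsequent corollary.
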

 \begin{proof}
 The mechanism is trivially GSP. Notice that each agent incurs a distance at most $\frac12$ to his preferred facility under this mechanism, and thus has a utility at least $\frac12$. It induces a total utility at least $\frac n2$, whereas the optimal total utility is no more than $n$, giving an approximation ratio 2.
 \end{proof}

{Note that the optimal solutions for $\Gamma^1$ under the total cost and total utility objectives are the same. By Proposition \ref{prop}, the optimal mechanism for $\Gamma^1$ (stated in Section \ref{sec:totalcost}) has an approximation ratio $\alpha$ for $\Gamma^{\alpha}$. So we can improve Theorem \ref{thm:with} when $\alpha< 2$.}

 \begin{corollary}
For $\Gamma^{\alpha}$ with private preferences, there exists a GSP mechanism with approximation ratio $\min\{2,\alpha\}$ under the total utility objective.
 \end{corollary}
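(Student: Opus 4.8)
The plan is to combine the two GSP mechanisms already available and to select between them according to the fixed coefficient $\alpha$. When $\alpha\ge 2$, I would simply invoke Mechanism \ref{mec:3}: Theorem \ref{thm:with} shows it is GSP and $2$-approximate under the total utility objective, and $2=\min\{2,\alpha\}$ in this regime, so nothing further is required. When $\alpha<2$, I would instead run the $O(n^2)$ mechanism for $\Gamma^1$ described in Section \ref{sec:totalcost}, which returns an optimal solution for the total cost objective of $\Gamma^1$.

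The key observation making the $\alpha<2$ case work is that in $\Gamma^1$ the total utility of a location profile $\mathbf y$ equals $\sum_{i\in N}(1-d(x_i,\mathbf y))=n-SC(\mathbf y)$, so a profile minimizing the total cost of $\Gamma^1$ is exactly one maximizing its total utility; hence that mechanism is also optimal for the total utility objective of $\Gamma^1$, i.e.\ it is $1$-approximate for that objective of $\Gamma^1$. Applying Proposition \ref{prop} with $\beta=1$ then yields an approximation ratio of $\alpha$ for the total utility objective of $\Gamma^{\alpha}$, which is $\min\{2,\alpha\}$ when $\alpha<2$.

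For group strategyproofness, I would note that $\alpha$ is a constant parameter of the problem $\Gamma^{\alpha}$, not a quantity reported by the agents, so the choice of which sub-mechanism to run is determined in advance and independently of the reports. Within each regime the selected sub-mechanism (Mechanism \ref{mec:3}, or the optimal mechanism for $\Gamma^1$) produces its output solely from the reported locations and ignores the reported preferences; since here preferences are the only private information, no coalition can gain by misreporting. Therefore the combined mechanism is GSP with approximation ratio $\min\{2,\alpha\}$.

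I do not anticipate any genuine obstacle here: the statement is essentially a bookkeeping corollary of Theorem \ref{thm:with} and Proposition \ref{prop}. The only points requiring a sentence of care are the identity between minimizing total cost and maximizing total utility in $\Gamma^1$, and the remark that casing on the fixed constant $\alpha$ does not jeopardize strategyproofness.
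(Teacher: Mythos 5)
Your proposal is correct and follows essentially the same route as the paper: the paper also notes that total-cost and total-utility optima coincide for $\Gamma^1$, applies Proposition \ref{prop} to the optimal $\Gamma^1$ mechanism to get ratio $\alpha$, and combines this with the $2$-approximate Mechanism \ref{mec:3} from Theorem \ref{thm:with}. Your explicit identity $SU(\mathbf y)=n-SC(\mathbf y)$ and the remark on casing over the fixed constant $\alpha$ are just slightly more detailed versions of what the paper leaves implicit.
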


 Next, we derive a lower bound for the total utility setting. 
 {It is equal to 1 for $\alpha=1$ and approaches 1 when $\alpha$ grows.
}

 \begin{theorem}\label{thm:39}
For $\Gamma^{\alpha}$ with private preferences,  no SP mechanism has an approximation ratio better than $\frac{30\alpha^2+38\alpha+2}{30\alpha^2+37\alpha+3}$  under the total utility objective.
  \end{theorem}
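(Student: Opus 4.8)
Following the contradiction template used in Theorems \ref{thm:a1}, \ref{thm:kk} and \ref{thm:36}, the plan is: assume an SP mechanism $f$ with approximation ratio $r$ strictly below $\frac{30\alpha^2+38\alpha+2}{30\alpha^2+37\alpha+3}$, exhibit two instances $\Gamma(\mathbf x,\boldsymbol\sigma)$ and $\Gamma(\mathbf x,\boldsymbol\sigma')$ that agree except for the reported preference of one pivot agent $i^\ast$, and derive from the $r$-approximation guarantee two incompatible constraints on the output of $f$. Since the claimed bound equals $1+\frac{\alpha-1}{30\alpha^2+37\alpha+3}$, it is trivial for $\alpha=1$ and tends to $1$ as $\alpha\to\infty$; so throughout we assume $\alpha>1$ and that $r$ is close to $1$, which forces $r$-approximate solutions to hug the optimal ones.

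First I would design the truthful instance $\Gamma(\mathbf x,\boldsymbol\sigma)$: a constant number of agents placed at a few carefully chosen positions --- a left cluster and a right cluster of agents preferring $F_1$, arranged so that the optimal $y_1$ (the median of that group) sits at a definite distance to the right of the pivot $i^\ast$, who lies near $0$ and truly prefers $F_1$, together with a group preferring $F_2$ whose optimal $y_2$ is pinned down so that $f$ cannot appease the pivot through $F_2$ in this instance. Writing out the optimal total utility and using $SU(f(\mathbf p))\ge OPT_S(\mathbf p)/r$, I would argue that in any $r$-approximate solution each agent is served by its intended facility (routing through the other facility, or paying the $\alpha$-discount, would cost the sum more than the slack $OPT_S(1-1/r)$ permits), and that near-optimality of the sum forces $y_1$ and $y_2$ into small neighbourhoods of the optimum; in particular it yields a lower bound $d(x_{i^\ast},y_1)\ge L(\alpha,r)$ together with a cap on $d(x_{i^\ast},y_2)$.

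Next I would pass to $\Gamma(\mathbf x,\boldsymbol\sigma')$, in which $i^\ast$ instead reports $F_2$; the pivot then leaves the $F_1$-group and joins the $F_2$-group, moving a facility toward $x_{i^\ast}$. Computing the optimal total utility of this instance and localizing $f$ the same way --- and splitting, as in the proof of Theorem \ref{thm:36}, according to which facility determines the pivot's (reported) utility --- I would obtain that $f(\mathbf x,\boldsymbol\sigma')$ must bring some facility to within distance $U(\alpha,r)$ of $x_{i^\ast}$, with $U(\alpha,r)$ strictly smaller than $L(\alpha,r)$ once $r<\frac{30\alpha^2+38\alpha+2}{30\alpha^2+37\alpha+3}$. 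Since $i^\ast$ truly prefers $F_1$ (weight $\alpha_1=1$), this means $i^\ast$ strictly increases its true utility by reporting $F_2$, contradicting strategyproofness. The exact constant is then produced by choosing the free parameters of the instance (the interior positions and the cluster multiplicities) so as to make the gap $L-U$ turn positive at the largest possible $r$; that optimization is what evaluates to $\frac{30\alpha^2+38\alpha+2}{30\alpha^2+37\alpha+3}$.

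The main obstacle is the instance design. Because the objective is a sum rather than a max/min, a near-optimal value leaves $f$ considerable freedom, so the construction must include enough auxiliary agents --- and position everything precisely --- to simultaneously (i) force the truthful optimal $y_1$ to be bounded away from the pivot, (ii) pin the truthful $y_2$ so the pivot cannot be appeased through $F_2$ in the truthful instance, and (iii) make the pivot's preference report genuinely move a facility toward it in the deviated instance, all while keeping these three effects alive as $\alpha$ ranges over $(1,\infty)$. Once the instance is fixed, the remainder is the routine but lengthy bookkeeping of enumerating the few candidate facility-to-agent assignments, evaluating $OPT_S$ in both instances, and carrying out the parameter optimization --- and it is exactly this bookkeeping that pins the bound to the stated rational function of $\alpha$; the trivial cases $\alpha=1$ and $\alpha\to\infty$ are dispatched separately.
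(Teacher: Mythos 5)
Your proposal is a template rather than a proof: the entire quantitative content of the theorem --- the instance, the agent positions, and the derivation of the constant $\frac{30\alpha^2+38\alpha+2}{30\alpha^2+37\alpha+3}$ --- is deferred to an unperformed ``instance design'' and ``parameter optimization.'' That is precisely the hard part, and the paper's construction is quite specific: six agents at $\mathbf x=(0,s,2s,3s,1,1)$ with $s=\frac{1}{3(\alpha+1)}$, and \emph{three} preference profiles $\boldsymbol\sigma=(2,1,1,1,2,2)$, $\boldsymbol\sigma'=(1,1,1,2,2,2)$, $\boldsymbol\sigma''=(2,1,1,2,2,2)$. The approximation guarantee pins $y_1>\frac{3s}{2}$ under $\boldsymbol\sigma$ and $y_1'<\frac{3s}{2}$ under $\boldsymbol\sigma'$, and the contradiction is obtained on the \emph{middle} profile $\boldsymbol\sigma''$ by a fork: if $y_1''\le\frac{3s}{2}$ then agent 4 deviates to reach $\boldsymbol\sigma$, and if $y_1''\ge\frac{3s}{2}$ then agent 1 deviates to reach $\boldsymbol\sigma'$. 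This three-instance, two-deviator architecture exists exactly because of the obstacle you yourself flag: for a sum objective the approximation ratio does not localize the output on $\boldsymbol\sigma''$, so no single two-instance comparison with one pivot can work --- the paper never localizes $y_1''$ at all, it only case-splits on it. Your plan, which insists on the Theorem~\ref{thm:a1}/\ref{thm:kk}/\ref{thm:36} pattern of two instances and one pivot, therefore runs into a structural wall, not just bookkeeping.

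There is also a directional error in your deviation argument. You have the pivot truly prefer $F_1$, report $F_2$, and gain because ``some facility'' moves within distance $U$ of it. If that facility is $F_2$, the pivot's true utility improves only through the discounted term $\frac{1-d(x_{i^\ast},y_2)}{\alpha}\le\frac{1}{\alpha}$, so a profitable deviation requires the truthful utility to be below $\frac{1}{\alpha}$, i.e.\ $d(x_{i^\ast},y_1)>1-\frac1\alpha$; with a ratio within $O(\frac{\alpha-1}{\alpha^2})$ of $1$, the sum objective cannot tolerate one agent losing $1-\frac1\alpha$ of utility in a constant-size instance, so this branch is infeasible. If instead you mean that reporting $F_2$ drags $F_1$ toward the pivot, you must explain why the mechanism moves $F_1$ in response to an agent joining the $F_2$-group --- which is not forced by anything you set up. The paper sidesteps this by having the deviators (agents 1 and 4) \emph{truly prefer $F_2$} but be served by $F_1$ through the $\alpha$-discount (this is why $s=\frac{1}{3(\alpha+1)}$ is chosen, so that $3s<\frac1\alpha$); they misreport $F_1$ precisely to pull $F_1$ toward themselves, and the gain flows through the discounted term that is genuinely active at their locations. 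As written, your argument would not close even if the instance were supplied.
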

 \begin{proof}
  Suppose $f$ is an SP mechanism with approximation ratio $r<\frac{30\alpha^2+38\alpha+2}{30\alpha^2+37\alpha+3}$.
  Consider an instance $\Gamma(\mathbf x,\boldsymbol \sigma)$ with $\mathbf x=(0,s,2s,3s,1,1)$ and $\boldsymbol\sigma=(2,1,1,1,2,2)$
  where $s=\frac{1}{3(\alpha+1)}$. {An optimal solution should locate $F_1$ near the left four agents, and locate $F_2$ near the right two agents. Indeed,
   $\langle2s,1\rangle$ is optimal, and} has a  total utility of $OPT=5-2s+\frac{1-2s}{\alpha}$.
  Let $\mathbf y=\langle y_1,y_2\rangle$ be the output of $f$. We claim that $y_1>\frac{3s}{2}$: if $y_1\le\frac{3s}{2}$, {by an overestimation}, the utility induced by $f$ is less than $5-2s-\frac{s}{2}+\frac{1}{\alpha}-\frac{3s}{2\alpha}<\frac{OPT}{r}$, a contradiction with the approximation ratio.

   For another instance $\Gamma(\mathbf x,\boldsymbol \sigma')$  with $\boldsymbol \sigma'=(1,1,1,2,2,2)$, let $\mathbf y'=\langle y'_1,y'_2\rangle$ be the output of $f$. By a symmetric analysis we have $y'_1< \frac{3s}{2}$.

  Now, we consider instance $\Gamma(\mathbf x,\boldsymbol \sigma'')$ with  $\boldsymbol \sigma''=(2,1,1,2,2,2)$. The optimal solution is to locate $y_1\in[s,2s]$ and $y_2=1$. 
 Let $\mathbf y''=\langle y''_1,y''_2\rangle$ be the output of $f$. By the approximation ratio, $f$ must locate $F_1$ close to the interval $[s,2s]$,  and $F_2$ close to 1.
   \begin{itemize}
   \item If  $y''_1\le \frac{3s}{2}$, then agent 4 can misreport his top choice as $F_1$, because in that case $f$  locates $F_1$ at $y_1>\frac{3s}{2}$. Then agent 4 becomes closer to $F_1$ and has a larger utility.
   \item If $y''_1\ge \frac{3s}{2}$, then agent 1 can misreport his top choice as $F_1$, because in that case $f$  locates $F_1$ at $y_1'<\frac{3s}{2}$. Then agent 1 becomes closer to $F_1$ and has a larger utility.
       \end{itemize}
       Hence, $f$ cannot be SP and $r$-approximate for $\Gamma(\mathbf x,\boldsymbol \sigma'')$.
  \end{proof}


\section{Unknown Locations}\label{sec:loc}

In this section, we study settings where the location profile of the agents is private and the preference profile is public. {For $m=2$, we} study the objectives of minimizing the total and maximum costs as well as maximizing the total and minimum utilities in Section \ref{sec:loc_cost} and Section \ref{sec:loc_utility}, respectively.

\subsection{Minimizing the Maximum and Total Costs} \label{sec:loc_cost}
 Procaccia and Tennenholtz \cite{procaccia2009approximate} propose the following  GSP mechanism for the standard two-facility game (i.e., $\Gamma^1$), which locates two facilities at the leftmost and rightmost locations of agents. It has an approximation ratio 2 and $(n-2)$ for minimizing the maximum cost and total cost, respectively. By Proposition \ref{prop}, it has a guarantee for $\Gamma^{\alpha}$.

\begin{mechanism}\label{mec:4}
Locate $F_1$ and $F_2$ at $\langle lt(\mathbf x),rt(\mathbf x)\rangle$.
\end{mechanism}


\begin{theorem}
For $\Gamma^{\alpha}$ with private locations, Mechanism \ref{mec:4} is  GSP and $2\alpha$-approximate (resp. $\alpha(n-2)$-approximate) under the maximum (resp. total) cost objective.
\end{theorem}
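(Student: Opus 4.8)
The plan is to establish the two claims separately: group strategyproofness, and the approximation guarantees, which follow from Proposition \ref{prop} once we know the corresponding bounds for $\Gamma^1$.

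For group strategyproofness, I would observe that Mechanism \ref{mec:4} outputs $\langle lt(\mathbf x), rt(\mathbf x)\rangle$, which depends only on the reported location profile $\mathbf x$ and ignores the (public) preference profile entirely. The key structural fact is that this mechanism is an \emph{extreme-point} mechanism: it always places the facilities at the two extremes of the reported locations. Consider any coalition $S\subseteq N$ that deviates from $\mathbf x$ to $\mathbf x'$ (agents outside $S$ report truthfully). I would argue that no member of $S$ strictly benefits. The cleanest way: look at the agent attaining $lt(\mathbf x)$ and the agent attaining $rt(\mathbf x)$; if either is in $S$, moving its report inward only shrinks the interval $[lt,rt]$ on that side (or some other coalition member becomes the new extreme, who is then even worse off), and a coalition member strictly inside the interval cannot bring either facility strictly closer to itself without some coalition member pushing an endpoint past its true location in the wrong direction — which makes that agent worse off. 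Since the original mechanism is already known to be GSP for $\Gamma^1$ (cited from \cite{procaccia2009approximate}), and the output here is identical (preferences are irrelevant to the output, and under fixed facility locations the cost $c_i$ is a monotone function of the distances, so an agent's induced cost for a given output is fully determined), GSP for $\Gamma^\alpha$ is inherited: any coalitional manipulation that helped some agent under $\Gamma^\alpha$ would have to change the output, but the same output change is available in $\Gamma^1$ via the same location deviation, and since costs are monotone in distances, a benefit in $\Gamma^\alpha$ would correspond to a strict distance decrease, hence a benefit in $\Gamma^1$ — contradicting GSP of the base mechanism.

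For the approximation ratios, Procaccia and Tennenholtz \cite{procaccia2009approximate} show that for $\Gamma^1$ Mechanism \ref{mec:4} is $2$-approximate under the maximum cost objective and $(n-2)$-approximate under the total cost objective. Proposition \ref{prop} states that a $\beta$-approximate mechanism for $\Gamma^1$ is $\beta\alpha$-approximate for $\Gamma^\alpha$ under any of the four objectives. Applying this with $\beta = 2$ gives the $2\alpha$ bound for maximum cost, and with $\beta = n-2$ gives the $\alpha(n-2)$ bound for total cost.

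The only genuine subtlety is the GSP argument, since Proposition \ref{prop} handles only approximation, not incentives; everything else is a direct citation. The main obstacle is thus to make precise why the output's independence from preferences, combined with monotonicity of $c_i$ in the facility distances, lets us transfer GSP from the $\Gamma^1$ setting — but since the output map is literally the same function of $\mathbf x$ in both settings and each agent's cost is a nondecreasing function of $\min$-weighted distances, this transfer is immediate.
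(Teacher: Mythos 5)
Your proposal is correct and follows essentially the same route as the paper: the approximation ratios come from Proposition \ref{prop} applied to the known $2$- and $(n-2)$-approximation guarantees of the two-extremes mechanism for $\Gamma^1$, and group strategyproofness follows from the direct observation that the output ignores preferences and any coalitional location deviation can only move each facility weakly farther from every member's true location (an extreme agent already has cost $0$ and is made strictly worse by any report that pulls its endpoint inward). One small caveat: your secondary ``inheritance from $\Gamma^1$-GSP'' argument is redundant and its final step is not sound as stated --- a strict decrease of $\min\{d_1,\alpha d_2\}$ need not force a strict decrease of $\min\{d_1,d_2\}$ --- but your direct argument already establishes GSP, so nothing is lost.
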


Fotakis and Tzamos \cite{fotakis2014power} prove that, for any deterministic anonymous\footnote{{A mechanism is anonymous if permuting the agents does not change the solution.}} SP mechanism $f$ with a bounded approximation ratio for the standard 2-facility location game $\Gamma^1$, and for {any instance with all agents located at 3 locations,} 
mechanism $f$ places the facilities at the two extreme locations $\langle lt(\mathbf x),rt(\mathbf x)\rangle$.
We can show that this property also holds for $\Gamma^{\alpha}$, and use it to prove lower bounds.

\begin{lemma}\label{lem:ext}
For any anonymous SP mechanism $f$ with a bounded approximation ratio for $\Gamma^{\alpha}$ with both objectives, and for any instance with all agents located at 3 locations, $f$ places the facilities at the two extreme locations $\langle lt(\mathbf x),rt(\mathbf x)\rangle$.
\end{lemma}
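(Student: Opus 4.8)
The plan is to lift the proof of Fotakis and Tzamos~\cite{fotakis2014power} from the standard game $\Gamma^1$ to $\Gamma^{\alpha}$. Throughout, fix an anonymous SP mechanism $f$ with a finite approximation ratio (for the maximum cost, or for the total cost) and an instance $(\mathbf x,\boldsymbol\sigma)$ whose agents occupy at most three distinct points; since preferences are public, $f$ may depend on $\boldsymbol\sigma$ and the only admissible manipulations are location misreports. Write $\mathbf y=\langle y_1,y_2\rangle=f(\mathbf x,\boldsymbol\sigma)$.

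First I would settle the base case in which the agents occupy only two distinct points $a<c$, with at least one agent at each. Locating the facilities at $\langle a,c\rangle$ yields cost $0$ for every agent (even an agent at $a$ preferring $F_2$ pays $\min\{d(a,c),\alpha\cdot d(a,a)\}=0$), so a finite ratio forces $f$ to output a zero-cost solution. But in $\Gamma^{\alpha}$ an agent at $a$ has cost $\min\{d(a,y_{\boldsymbol\sigma_i}),\,\alpha\, d(a,y_{3-\boldsymbol\sigma_i})\}$, which vanishes only if $a\in\{y_1,y_2\}$, and symmetrically for $c$; with only two facilities this forces $\{y_1,y_2\}=\{a,c\}$. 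This step is insensitive to $\alpha$ and to $\boldsymbol\sigma$, and it is the anchor for everything else.

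For three occupied points $a<b<c$ I would run the Fotakis--Tzamos perturbation argument: assume $\{y_1,y_2\}\neq\{a,c\}$ and obtain a violation of strategyproofness by comparing $(\mathbf x,\boldsymbol\sigma)$ with the instances obtained by sliding a single agent currently at $b$ toward $a$ (or toward $c$), together with the two-point instances reached at the ends of the slide, where the base case pins the facilities down. Two ingredients transfer directly: (i) for fixed facility positions $\langle y_1,y_2\rangle$, an agent's $\Gamma^{\alpha}$ cost $\min\{d(x,y_1),\alpha\, d(x,y_2)\}$ is continuous and piecewise linear in the agent's reported location, vanishing exactly at $y_1$ and $y_2$, just as in $\Gamma^1$; and (ii) anonymity, so that a $b$-agent may be merged with the agents at $a$ or at $c$ without affecting the output. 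Keeping every intermediate report supported on $\{a,b,c\}$ guarantees that the two-point base case and all the approximation-ratio estimates remain available along the slide.

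The main obstacle lies in re-closing the quantitative part of their argument. In $\Gamma^1$ one shows that the facility ``responsible'' for a moving agent tracks it monotonically, using the shape of the induced cost together with the consequences of a bounded ratio on the perturbed instances; in $\Gamma^{\alpha}$ the approximation-ratio constraint on three-point instances is a factor $\alpha$ weaker, since an agent assigned to its less preferred facility pays $\alpha$ times the distance, so I must recheck that Fotakis and Tzamos's inequalities still close under this slack — if not, one sharpens the bound on the perturbed instances by taking more copies of agents at $a$ and at $c$ (anonymity makes this harmless). A second, more local difficulty: as the agent slides from $b$ toward $a$, the facility determining its cost may switch from $F_2$ to $F_1$; I must argue that this switch only decreases the cost (continuity of the cost across the switch) and that past the switch point the responsible facility $F_1$ moves monotonically toward $a$, so that comparing with the endpoint two-point instance — where $F_1$ sits exactly at $a$ — contradicts SP unless $y_1=a$ already, and symmetrically $y_2=c$.
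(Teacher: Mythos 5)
Your high-level strategy coincides with the paper's: both reduce the lemma to the argument of Fotakis and Tzamos \cite{fotakis2014power}, and your two-point base case is correct and genuinely useful --- a zero-cost optimum together with the fact that an agent's cost $\min\{d(x,y_{\sigma_1}),\alpha\, d(x,y_{\sigma_2})\}$ vanishes only when $x\in\{y_1,y_2\}$ does force $\{y_1,y_2\}=\{a,c\}$ on two occupied points. But the heart of the lemma is the three-point case, and you explicitly leave it open: you say you ``must recheck'' whether the Fotakis--Tzamos inequalities still close under the extra factor-$\alpha$ slack, and offer an unverified fallback (adding copies of agents at $a$ and $c$). The paper's proof asserts that the argument of their Lemma~3.1 carries over verbatim to $\Gamma^{\alpha}$ and then concludes; whether one finds that assertion satisfying or not, a proof must actually perform that verification, because the $\alpha$-weakened approximation guarantee on the perturbed instances is exactly the place where the adaptation could break. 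As written, your proposal is a plan with the decisive step deferred.

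There is a second, more concrete gap: the lemma is about instances with arbitrarily many agents occupying three \emph{locations}, not about three agents. Your perturbation moves a single agent away from $b$. If $b$ hosts several agents, this move neither reaches a two-point instance at the end of the slide (so your base case is unavailable there) nor, in general, obliges the mechanism to respond, since the remaining agents still report $b$; anonymity lets you permute agents but does not let you treat a coalition of co-located agents as one agent for the purposes of a unilateral deviation. The paper closes this gap separately: it first proves the statement for three-agent instances and then lifts it to three-location instances using the fact that every SP mechanism is partially group strategyproof (Lemma~2.1 of \cite{lu2010asymptotically}), i.e., when all agents at one location misreport simultaneously, at least one of them does not benefit. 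You need this (or an equivalent coalition argument) as an explicit step; without it the proposal only establishes the $n=3$ case.
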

\begin{proof}
  Let $f$ be an arbitrary deterministic anonymous SP mechanism with a bounded approximation ratio for $\Gamma(\alpha)$ under the total cost objective. Then it also has a bounded ratio under the maximum cost objective. Let $\Gamma(\mathbf x,\boldsymbol \sigma)$ be an arbitrary
instance of $\Gamma(\alpha)$  with $n = 3$ agents. Let $\mathbf p=(\mathbf x,\boldsymbol \sigma)$ be the profile, and $f(\mathbf p) = \langle y_1,y_2\rangle$ be the output. Assume w.l.o.g. $x_1\le x_2\le x_3$. Using the same proof for Lemma 3.1 of \cite{fotakis2014power}, we can prove that the rightmost
facility is always located to the rightmost agent, that is, $y_1 =
x_3$ if $y_1\ge y_2$, and $y_2=x_3$  otherwise. Using a symmetric
argument, we can show that the leftmost facility is always
located to the leftmost agent, that is, $y_1=x_1$ if $y_1\le y_2$, and $y_2=x_1$ otherwise. These two claims imply that $f$ must place the facilities at the two extreme locations.

Now we consider an instance $\Gamma(\mathbf x',\boldsymbol \sigma')$ where the $n\ge3$ agents are located at 3 locations. We call it a 3-location instance. We can extend the above conclusion for 3-agent instances to 3-location instances by restating the proofs with
three coalitions of agents instead of three agents. A mechanism is partial GSP if for any group of agents at the same
location, each individual cannot benefit if they misreport simultaneously. Using the fact that any SP mechanism is also
partial GSP (Lemma 2.1 of \cite{lu2010asymptotically}), we obtain that $f$ must place the facilities at the two extreme locations.
  \end{proof}

\begin{theorem}
{Under the maximum cost (resp. total cost) objective, no {anonymous} SP mechanism for $\Gamma^{\alpha}$ with private locations has an approximation ratio less than $\alpha$ (resp. $\frac{(\alpha+1)(n-2)}{2}$).}
\end{theorem}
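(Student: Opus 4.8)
The plan is to invoke Lemma~\ref{lem:ext}, which says that any anonymous SP mechanism $f$ with a bounded approximation ratio must, on any instance whose agents occupy only three distinct locations, output the two extreme agent locations as the facility sites. (If $f$ has an unbounded ratio the claimed bounds hold vacuously, so we may assume boundedness.) Hence for each objective it suffices to design a $3$-location instance on which the extreme placement is expensive relative to the optimum. The one point to be careful about is that Lemma~\ref{lem:ext} fixes only the \emph{set} $\{lt(\mathbf x),rt(\mathbf x)\}$ of sites, not which of $F_1,F_2$ is assigned to which end; the construction must therefore be bad under both labelings, which I achieve by placing oppositely‑minded agents at the middle location.

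For the maximum cost bound I would take $n=4$ agents at locations $0,\epsilon,1$ with $0<\epsilon<\frac{1}{\alpha+1}$: one agent at $0$ preferring $F_1$, one agent at $1$ preferring $F_2$, and two agents at $\epsilon$, one preferring $F_1$ and one preferring $F_2$. Putting $\langle\epsilon,1\rangle$ (that is, $F_1$ at $\epsilon$ and $F_2$ at $1$) gives cost $0$ to everyone except the agent at $0$, whose cost is $\epsilon$, so $OPT_B\le\epsilon$. By Lemma~\ref{lem:ext}, $f$ outputs $\langle0,1\rangle$ or $\langle1,0\rangle$; in either case one of the two agents at $\epsilon$ has its preferred facility at the far end (distance $1-\epsilon$) and the nearer, non‑preferred facility at distance $\epsilon$, so its cost is $\min\{1-\epsilon,\alpha\epsilon\}=\alpha\epsilon$ (using $\epsilon<\frac{1}{\alpha+1}$, hence $\alpha\epsilon<1-\epsilon$). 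Thus $BC(f(\mathbf p))\ge\alpha\epsilon$ and the ratio is at least $\alpha$.

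For the total cost bound I would use the analogous instance (assume $n-2$ even for cleanliness): one agent at $0$ preferring $F_1$, one agent at $1$ preferring $F_2$, and $n-2$ agents at $\epsilon$, of which $\tfrac{n-2}{2}$ prefer $F_1$ and $\tfrac{n-2}{2}$ prefer $F_2$, with $0<\epsilon<\frac{1}{\alpha+1}$. Again $\langle\epsilon,1\rangle$ yields total cost $\epsilon$ (only the agent at $0$ pays), so $OPT_S\le\epsilon$. By Lemma~\ref{lem:ext}, $f$ outputs $\langle0,1\rangle$ or $\langle1,0\rangle$; under either labeling the two extreme agents pay $0$ (each sits on a facility), half the middle agents pay $\min\{\epsilon,\alpha(1-\epsilon)\}=\epsilon$ and the other half pay $\min\{1-\epsilon,\alpha\epsilon\}=\alpha\epsilon$, for a total of $\frac{(n-2)(\alpha+1)}{2}\epsilon$. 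Dividing by $OPT_S\le\epsilon$ gives the ratio $\frac{(\alpha+1)(n-2)}{2}$.

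The crux—and the reason both middle agents with opposite preferences are needed—is exactly the label‑assignment freedom left by Lemma~\ref{lem:ext}: with a single middle agent, or with all middle agents sharing one preference, the mechanism could route the preferred facility to the near extreme and escape with cost $\epsilon$ instead of $\alpha\epsilon$, killing the $\alpha$‑blowup; balancing the preferences forces the blowup under either labeling. A minor loose end is parity: when $n-2$ is odd, the middle agents split as evenly as possible and the bound loses only a lower‑order additive term, which I would simply remark on rather than chase.
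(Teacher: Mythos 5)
Your proposal is correct and follows essentially the same route as the paper: invoke Lemma~\ref{lem:ext} to force the facilities onto the two extreme locations, then use a 3-location instance with one agent at each endpoint and the middle agents at $\epsilon$ split evenly between the two preferences, so that under either labeling half of them incur cost $\alpha\epsilon$ while the optimum is $\epsilon$. Your extra care about the label-assignment freedom and the parity of $n-2$ only makes explicit what the paper's construction (with $n/2$ agents preferring each facility) handles implicitly.
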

\begin{proof}
 Suppose that there is an {anonymous} SP mechanism $f$. Consider a 3-location instance $\Gamma(\mathbf x,\boldsymbol \sigma)$ with location profile $\mathbf x=(0,\epsilon,\ldots,\epsilon,1)$ for a small $\epsilon>0$, and preference profile $\boldsymbol \sigma=(1,\ldots,1,2,\ldots,2)$ with $\frac{n}{2}$ agents preferring $F_1$.

 Under the maximum cost objective, the optimal maximum cost is at most $\epsilon$ (attained by a solution $\langle \epsilon,1\rangle$). By Lemma \ref{lem:ext}, $f$ places the facilities at 0 and 1, and the maximum cost  is $\alpha\epsilon$. So $f$ cannot have an approximation ratio less than $\alpha$. 

 Under the total cost objective, the optimal total cost is $\epsilon$, attained by a solution $\langle \epsilon,1\rangle$. Mechanism $f$ places the facilities at 0 and 1, and the total cost is $\frac{\alpha(n-2)\epsilon}{2}+\frac{(n-2)\epsilon}{2}$.  So $f$ cannot have an approximation ratio less than $\frac{(\alpha+1)(n-2)}{2}$.
\end{proof}

For minimizing the total cost, an intuitive idea is to locate $F_1$ and $F_2$ at the median of the locations of agents who prefer $F_1$ and $F_2$, respectively. 
Since any misreporting from agent $i\in N$ would not change the location of the less preferred facility or reduce the distance to the preferred facility, this mechanism is SP. 
However, it cannot achieve a bounded approximation ratio. The following example illustrates this.

\begin{example}
Consider the location profile $\mathbf x=(0,0,0,1)$, and the preference profile $\boldsymbol \sigma=(1,2,2,2)$. The mechanism would locate both $F_1$ and $F_2$ at 0.  The induced total cost is 1, whereas the optimal solution $\langle0,1\rangle$ has a total cost of 0.
\end{example}

\subsection{Maximizing the Minimum and Total Utilities} \label{sec:loc_utility}
Note that the utility of an agent is at most 1. Mechanism \ref{mec:3}, which locates two facilities at a fixed point $\frac12$, induces a utility at least $\frac12$ for each agent. We have the following.
 \begin{theorem}
For $\Gamma^{\alpha}$ with private locations, Mechanism \ref{mec:3} is GSP and 2-approximate for both objectives of maximizing the total utility and the minimum utility.
 \end{theorem}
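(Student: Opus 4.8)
The plan is to exploit the fact that Mechanism~\ref{mec:3} is a \emph{constant} mechanism: irrespective of the reported location profile $\mathbf x$ and the (public) preference profile $\boldsymbol\sigma$, its output is always $\langle \tfrac12,\tfrac12\rangle$. Group strategyproofness is then immediate, since no coalition of agents can alter the outcome by any joint misreport of their private locations, so every coalition member's utility is unchanged; in particular no member can strictly benefit. This disposes of the GSP claim for both the total-utility and the minimum-utility objectives at once.

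For the approximation guarantee I would essentially transcribe the utility bound from the proof of Theorem~\ref{thm:with}. Fix an arbitrary agent $i$ with preference $\boldsymbol\sigma_i$. Since $x_i\in[0,1]$ and both facilities are placed at $\tfrac12$, we have $d(x_i,y_{\sigma_1})=d(x_i,\tfrac12)\le\tfrac12$, and because $\alpha_1=1\le\alpha$,
\[
u_i\big(\langle\tfrac12,\tfrac12\rangle\big)=\max\Big\{\,1-d(x_i,\tfrac12),\ \tfrac{1-d(x_i,\tfrac12)}{\alpha}\,\Big\}=1-d(x_i,\tfrac12)\ \ge\ \tfrac12 .
\]
Hence the minimum utility produced by the mechanism is at least $\tfrac12$ and the total utility is at least $\tfrac n2$. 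On the other hand, for any facility placement every agent's utility is at most $1$ (as $d\ge 0$ and $\alpha\ge 1$ force both $1-d$ and $\tfrac{1-d}{\alpha}$ to be at most $1$), so the optimal minimum utility is at most $1$ and the optimal total utility is at most $n$. Taking the ratio in each case yields an approximation ratio of at most $2$.

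I do not expect a genuine obstacle here; the statement is a close analogue of Theorem~\ref{thm:with}, with private preferences replaced by private locations. The only point that deserves an explicit word is that strategyproofness survives this change of information model, and that is handled immediately by the observation that the mechanism ignores all reports. The single quantitative ingredient, the per-agent bound $u_i\ge\tfrac12$, follows directly from the location space being $[0,1]$ together with $\alpha\ge 1$, so the write-up will be short.
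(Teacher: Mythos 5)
Your proposal is correct and matches the paper's own (very brief) argument: the paper justifies this theorem by noting that the constant output $\langle\frac12,\frac12\rangle$ gives every agent utility at least $\frac12$ while any agent's utility is at most $1$, exactly as in Theorem~\ref{thm:with}. Your explicit handling of GSP via the mechanism's independence from all reports is the same observation the paper relies on.
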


Next, we provide lower bounds for both of the objectives.
 \begin{theorem}
  Under the minimum utility objective, no SP mechanism for the problem $\Gamma^{\alpha}$ with private locations has an approximation ratio better than 1.5 when $\alpha\ge 3$.
 \end{theorem}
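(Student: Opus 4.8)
The plan is to exploit the fact that, for $\alpha\ge 3$, a mechanism with ratio better than $\tfrac32$ is forced to serve \emph{every} agent through its preferred facility, which essentially collapses the two‑facility problem to a single‑facility one on instances where all agents share a preference. The key preliminary observation is that Mechanism~\ref{mec:3} (both facilities at $\tfrac12$) already guarantees every agent utility at least $\tfrac12$, so $OPT\ge\tfrac12$ on \emph{every} instance; hence if $f$ is strategyproof with ratio $r<\tfrac32$, its minimum utility is always $>\tfrac{1}{2r}>\tfrac13\ge\tfrac1\alpha$. Since the non‑preferred facility yields utility at most $\tfrac1\alpha$, under $f$ each agent's utility must come from its preferred facility, i.e.\ equals $1-|x_i-y|$ with $y$ the location of that facility; in particular, on a profile where all agents prefer $F_1$, every agent cares only about $y_1=f(\cdot)_1$.

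\textbf{Construction.} I would restrict attention to two‑agent instances with the public preference profile $\boldsymbol\sigma=(1,1)$, and let agents manipulate only their locations. For such an instance with locations $(x_1,x_2)$ one checks $OPT=1-|x_1-x_2|/2$ (place $F_1$ at the midpoint; no facility is within distance $<|x_1-x_2|/2$ of both agents, and the non‑preferred facility contributes at most $\tfrac13$). Put $p:=f(0,1)_1$; the bound on $f$'s minimum utility on $(0,1)$ forces $1-p>\tfrac13$ and $1-|1-p|=p>\tfrac13$, so $p\in(\tfrac13,\tfrac23)$. Now look at the instance $(p,1)$: if the agent truly at $p$ reports location $0$, the profile becomes $(0,1)$, so $f$ places $F_1$ at $p$ — exactly on this agent — giving it utility $1$; strategyproofness then forces $f$ on $(p,1)$ to give this agent utility $1$ as well, which (the non‑preferred facility can never yield $1$ since $\alpha\ge3$) is only possible if $f$ places $F_1$ at $p$ on $(p,1)$. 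But then the agent at $1$ has utility only $\max\{p,\tfrac1\alpha\}=p$, while $OPT(p,1)=\tfrac{1+p}{2}$, so $r\ge\tfrac{(1+p)/2}{p}$; with $r<\tfrac32$ this gives $p>\tfrac12$. A symmetric argument on $(0,p)$ — the agent at $p$ reports location $1$ — forces $F_1$ onto $p$ there too, leaving the agent at $0$ with utility $\max\{1-p,\tfrac1\alpha\}=1-p$ against $OPT=1-\tfrac p2$, hence $r\ge\tfrac{1-p/2}{1-p}$, which with $r<\tfrac32$ gives $p<\tfrac12$ — a contradiction.

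\textbf{Main obstacle.} The delicate points are, first, to verify the reduction robustly: $r<\tfrac32$ together with $\alpha\ge3$ must force every agent onto its preferred facility \emph{even after a misreport} (so that in the auxiliary instances the relevant utilities are genuinely the simple quantities $1-|x_i-y_1|$ and the strategyproofness comparisons are between them); here one uses that $OPT\ge\tfrac12$ on all the instances that arise. Second, and more conceptually, the contradiction is \emph{not} obtained by displaying a profitable deviation — strategyproofness of $f$ forbids that — but by using strategyproofness to \emph{pin down} $f$'s output on the auxiliary instances $(p,1)$ and $(0,p)$ (each time: "reporting to make the facility land on you" would yield utility $1$, so by strategyproofness the truthful outcome must already yield utility $1$), and then reading off the approximation guarantee there. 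Once this is set up, only the elementary arithmetic of $\tfrac{(1+p)/2}{p}<\tfrac32$ and $\tfrac{1-p/2}{1-p}<\tfrac32$ remains.
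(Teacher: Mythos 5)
Your proof is correct, but it takes a different construction from the paper's. Both arguments share the same two core ingredients: (i) for $\alpha\ge 3$ and ratio $r<\tfrac32$, the guaranteed minimum utility exceeds $\tfrac13\ge\tfrac1\alpha$, so every agent's utility must be realized through its preferred facility; and (ii) an agent located exactly where the facility would land under some other report can, by misreporting, snap the facility onto its true position and obtain utility $1$. The paper instantiates this on a single $3$-agent instance with mixed preferences $(0,\tfrac12,1)$, $\boldsymbol\sigma=(1,2,1)$: it shows $y_1\in(\tfrac13,\tfrac23)$, assumes w.l.o.g.\ $y_1\le\tfrac12$, and then argues that on the shifted instance $(y_1,\tfrac12,1)$ the ratio forces $F_1$ strictly to the right of $y_1$, so the agent at $y_1$ profits by reporting $0$ --- a direct strategyproofness violation. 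You instead work with $2$-agent instances sharing the preference $(1,1)$, and use strategyproofness in the opposite direction: it \emph{pins down} $f$'s output at $p=f(0,1)_1$ on both auxiliary instances $(p,1)$ and $(0,p)$, after which the approximation guarantee yields $p>\tfrac12$ and $p<\tfrac12$ simultaneously. Your route avoids the w.l.o.g.\ symmetry step (at the cost of running two auxiliary instances) and the middle agent preferring $F_2$ (which in the paper's instance only serves to fix the value of $OPT$ at $\tfrac12$; in your two-agent instances $OPT$ is pinned directly by the midpoint computation). Both are sound; all the arithmetic in your version checks out, including the boundary case $p=\tfrac12$, which already forces $r\ge\tfrac32$ from the first inequality.
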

 \begin{proof}
 When $\alpha\ge 3$, suppose that $f$ is an SP mechanism with approximation ratio $r<1.5$.
 Consider an instance $\Gamma(\mathbf x,\boldsymbol \sigma)$  with $\mathbf x=(0,\frac12,1)$ and $\boldsymbol \sigma=(1,2,1)$.
  Clearly, the optimal minimum utility is $\frac12$, with respect to the solution $\langle\frac12,\frac12\rangle$. So the minimum utility induced by $f$ is larger than $\frac13$. Let $\mathbf y=\langle y_1,y_2\rangle$ be the output. If $y_1<\frac13$, then the utility of agent 3 is $\max\{y_1,\frac{y_2}{\alpha}\}\le \frac13$, a contradiction. Symmetrically, it cannot be $y_1>\frac23$. Then it must be $y_1\in(\frac13,\frac23)$. Assume w.l.o.g.  that $y_1\in (\frac13,\frac12]$. 

 Now we consider another instance $\Gamma(\mathbf x',\boldsymbol \sigma)$ with location profile $\mathbf x'=(y_1,\frac12,1)$. Note that the utility of an agent who is served by the less preferred facility is at most $\frac{1}{\alpha}\le\frac13$, and thus a good solution should make each agent be served by the preferred one. Thus, solution
$\langle \frac{1+y_1}{2},\frac12\rangle$ is optimal, and has a minimum utility of $\frac{1+y_1}{2}$.
 Let $\mathbf y'=\langle y'_1,y'_2\rangle$ be the output of $f$.
 By the approximation ratio, $f$ must locate $F_1$ at $y'_1>y_1$ (otherwise, the utility of agent 3 is $1-(1-y'_1)=y'_1\le y_1<\frac{1+y_1}{2r}$).
 {Then, if agent 1 misreports the location as 0, the mechanism receives a profile $(\mathbf x, \boldsymbol \sigma)$, and would locate $F_1$ at $y_1$ by the previous analysis. Hence, under instance $\Gamma(\mathbf x',\boldsymbol \sigma)$, agent 1 would gain by misreporting his location as 0 (receiving a utility of 1), a contradiction.}
\end{proof}

\begin{theorem}\label{thm:47}
  Under the minimum utility objective, no SP mechanism for  $\Gamma^{\alpha}$ with private locations has an approximation ratio better than {$\min\{\frac{\alpha+1}{2},\frac76\}$ when $1\le \alpha<3$.}
 \end{theorem}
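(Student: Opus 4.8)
The plan is to adapt the construction used for the regime $\alpha\ge 3$ in the preceding theorem. I would again take the public preference profile $\boldsymbol\sigma=(1,2,1)$ and the three-location family $\mathbf x=(0,\tfrac12,1)$, so that two agents want $F_1$ from the two ends while the middle agent wants $F_2$, and then perturb the leftmost agent. The first step is to compute the optimal minimum utility $U^\ast(\alpha)$ of $\Gamma(\mathbf x,\boldsymbol\sigma)$ exactly: placing $F_1$ at $a$ and $F_2$ at $1-a$ and optimizing over $a\in[0,\tfrac12]$ gives $U^\ast(\alpha)=\max\{\tfrac12,\tfrac{3}{2(\alpha+1)}\}$, equal to $\tfrac{3}{2(\alpha+1)}$ on $[1,2]$ and to $\tfrac12$ on $[2,3)$, a short check ruling out asymmetric placements and placements with both facilities on one side. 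With $U^\ast(\alpha)$ in hand, $r$-approximability of a candidate SP mechanism $f$ forces the output $\langle y_1,y_2\rangle$ on this instance to have minimum utility at least $U^\ast(\alpha)/r$, which, by inspecting the two end agents, confines $y_1$ to an interval centred at $\tfrac12$; by the left--right symmetry of the instance we may assume $y_1\le\tfrac12$ (the case $y_1>\tfrac12$ being the mirror image, perturbing agent $3$ instead).

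The second step mirrors the earlier argument. Consider the perturbed instance $\Gamma(\mathbf x',\boldsymbol\sigma)$ with $\mathbf x'=(y_1,\tfrac12,1)$, where agent $1$'s true location is now $y_1$. Its optimum is attained essentially by placing $F_1$ at the midpoint $\tfrac{1+y_1}{2}$ of the two $F_1$-preferring agents (min utility $\tfrac{1+y_1}{2}$), except when $\alpha$ is small enough that routing agent $3$ to $F_2$ near $1$ (utility $\tfrac1\alpha$) is competitive. I would show that to be $r$-approximate on $\mathbf x'$, $f$ must place $F_1$ strictly to the right of $y_1$; then agent $1$ can misreport its location as $0$, which turns the reported profile into the original $(\mathbf x,\boldsymbol\sigma)$, for which $f$ places $F_1$ at $y_1$ and hence gives agent $1$ utility $1$, contradicting strategyproofness. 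Pushing the quantitative inequalities through, the obstruction to ``$F_1$ must move right'' has two sources: the geometric balance between agents $1$ and $3$ gives a constraint of the form $r\ge\tfrac{1+y_1}{2y_1}$, while the option of serving agent $3$ by $F_2$ (available precisely because $\tfrac1\alpha>\tfrac13$ when $\alpha<3$) weakens this to something of order $\tfrac{\alpha+1}{2}$ when $\alpha$ is close to $1$; taking the worst admissible $y_1$ and the worse of the two constraints yields $r\ge\min\{\tfrac{\alpha+1}{2},\tfrac76\}$, with the crossover at $\alpha=\tfrac43$.

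The main obstacle is exactly the case analysis forced by $\alpha<3$: unlike the $\alpha\ge 3$ case, an agent routed to its less-preferred facility still enjoys utility $\tfrac1\alpha>\tfrac13$, so $f$ is not forced to serve every agent by its preferred facility, and on the perturbed instance $f$ could in principle park $F_2$ near $1$ to cover agent $3$ cheaply and place $F_1$ almost anywhere. I would dispose of this by splitting on which facility serves agents $2$ and $3$ under $f(\mathbf x',\boldsymbol\sigma)$: in the branches where agent $3$ (or agent $2$) is served by its non-preferred facility, the middle agent at $\tfrac12$ is left with utility only $O(\tfrac1\alpha)$ or $O(\tfrac12)$, already violating $r<\min\{\tfrac{\alpha+1}{2},\tfrac76\}$; in the remaining branch the ``$y_1'>y_1$'' deviation argument of the previous theorem applies essentially verbatim. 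Additional care is needed in Step 1 to confirm the formula for $U^\ast(\alpha)$ and in stitching the two regimes together at the boundary $\alpha=\tfrac43$.
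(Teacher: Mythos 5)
Your construction is genuinely different from the paper's: you reuse the preference profile $\boldsymbol\sigma=(1,2,1)$ from the $\alpha\ge 3$ theorem, whereas the paper switches to $\mathbf x=(0,\frac12,1)$ with $\boldsymbol\sigma=(1,1,2)$, whose optimum is $\frac34$ (attained by $\langle\frac14,1\rangle$). The switch is not cosmetic, and your Step 1 contains a genuine gap. The claim that $r$-approximability on $\Gamma\bigl((0,\frac12,1),(1,2,1)\bigr)$ confines $y_1$ to an interval centred at $\frac12$ is false for most of the range $1\le\alpha<3$: an end agent served by its less-preferred facility still gets utility up to $\frac1\alpha$, which exceeds $U^\ast(\alpha)/r$ whenever $\alpha\lesssim\frac73$. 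Concretely, at $\alpha=2$ the output $\langle 0,1\rangle$ gives agent 1 utility $1$, agent 2 utility $\frac12$, and agent 3 utility $\max\{0,\frac12\}=\frac12$, i.e.\ it is \emph{exactly optimal} with $y_1=0$; for $\alpha\le 2$ the outcome $\langle 0,\frac{3\alpha}{2(\alpha+1)}\rangle$ is likewise optimal. A mechanism that behaves this way defeats your Step 2 twice over: the perturbed instance $\mathbf x'=(y_1,\frac12,1)$ coincides with $\mathbf x$ when $y_1=0$, so there is no deviation to exploit; and even for small $y_1>0$, placing $F_1$ exactly at $y_1$ and $F_2$ at $\frac{3\alpha}{2(\alpha+1)}$ is itself optimal on $\mathbf x'$ when $\alpha<\frac{2-y_1}{1+y_1}$, so "$F_1$ must move strictly right of $y_1$" cannot be forced and no SP violation follows. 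Your proposed repair (branching on which facility serves agents 2 and 3) does not help, because the problematic configurations are not approximately bad --- they are optimal.

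The paper's choice of $(1,1,2)$ avoids exactly this: with two adjacent $F_1$-agents at $0$ and $\frac12$ and the $F_2$-agent at $1$, the optimum $\frac34$ is large enough that serving agents $1,2$ by $F_2$, or covering agents $2,3$ by a single facility, each caps some agent's utility at $\frac{3}{4\alpha}$ or $\frac{3}{2(\alpha+1)}$; ruling these out is precisely what produces the two terms $\alpha$ and $\frac{\alpha+1}{2}$ in the bound, after which the midpoint-perturbation argument (perturbing agent $2$ to $y_1$ when $y_1\ge\frac14$, or agent $1$ to $y_1$ when $y_1<\frac14$) yields the $\frac76$ term. If you want to keep your instance, you would have to restrict to $\alpha$ close to $3$; for the rest of the range you need an instance whose optimum is strictly larger than $\frac1\alpha$ by the target factor, which is what $(1,1,2)$ provides.
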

  \begin{proof}
When $1\le \alpha<3$, suppose that $f$ is an SP mechanism with approximation ratio $r<\min\{\frac{\alpha+1}{2},\frac76\}\le \alpha$.
Consider  the agents' location profile $\mathbf{x}=(0,\frac12,1)$ {and preference profile $\boldsymbol{\sigma}=(1,1,2)$.} An optimal solution $\langle\frac14,1\rangle$ has a minimum utility of $\frac34$. Let $\langle y_1,y_2\rangle$ be the output. If $y_2\le y_1$, {it is easy to see that facility $F_2$ cannot serve agents 1 and 2 simultaneously, otherwise one of them has a utility at most $\frac{3}{4\alpha}$, violating the approximation ratio. Also, $F_2$ cannot serve agent 3, otherwise agent 1 would have an unacceptably small utility.  So agents 2 and 3 must be served by $F_1$. However, for all possible locations of $F_1$, at least one of agents 2 and 3 has a utility at most $\frac{3}{2(\alpha+1)}$, contradicting the approximation ratio.  }

 Next, we consider the case $y_1< y_2$, where agent 1 is served by $F_1$. If $y_1\ge\frac12$, then the utility of agent 1 is at most $\frac12$, violating the approximation ratio. So it must be $y_1<\frac12$. We discuss two cases.

\noindent\textbf{Case 1.} $y_1\ge \frac14$. Consider the location profile $\mathbf{x'}=(0,y_1,1)$, and let $\langle y_1',y_2'\rangle$ be the output of $f$.  The optimal minimum utility is $1-\frac{y_1}{2}$, attained by the solution $\langle \frac{y_1}{2},1\rangle$. It cannot be $y_1'=y_1$, otherwise the minimum utility is at most $1-y_1<(1-\frac{y_1}{2})\frac67$, contradicting the approximation ratio.
Thus, {under instance $\Gamma^{\alpha}(\mathbf{x'},\boldsymbol{\sigma})$,} agent 2 can misreport  $\frac12$, such that {the mechanism, which receives a location profile $\mathbf{x}$, locates facility $F_1$} at his true location $y_1$, increasing his utility to 1.

\noindent\textbf{Case 2.} $y_1< \frac14$. Consider the location profile $\mathbf{x}''=(y_1,\frac12,1)$, and let $\langle y_1'',y_2''\rangle$ be the output of $f$. The optimal minimum utility is $\frac34+\frac{y_1}{2}$. Similarly, by the approximation ratio, it cannot be $y_1''=y_1$. Thus, {under instance $\Gamma^{\alpha}(\mathbf{x''},\boldsymbol{\sigma})$,} agent 1 can misreport his location as 0 such that {the mechanism, which receives a location profile $\mathbf{x}$, locates facility $F_1$} at his true location $y_1$, increasing his utility to 1.
\end{proof}

 \begin{theorem}\label{thm:48}
  Under the total utility objective, every SP mechanism for the problem $\Gamma^{\alpha}$ with private locations has an approximation ratio at least $1+\frac{\alpha-1}{(2\alpha+2)/t-\alpha}$, where $t=\min\{\frac{1}{3\alpha},1-\frac{1}{\alpha}\}$.
\end{theorem}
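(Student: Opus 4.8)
The plan is to argue by contradiction, in the same spirit as the other lower bounds for private locations (cf.\ the proof of Theorem~\ref{thm:47}) and for total utility (cf.\ the proof of Theorem~\ref{thm:39}). First I would rewrite the target inequality: an SP mechanism $f$ with approximation ratio $r$ strictly below the claimed value satisfies $r<1+\frac{\alpha-1}{(2\alpha+2)/t-\alpha}$, which after clearing denominators is $r<\frac{2\alpha+2-t}{2\alpha+2-\alpha t}$. The numerator and the denominator should arise, up to a common scaling, as the optimal total utility of a carefully built base instance and as an upper bound on the total utility of any solution in which one designated agent is served too far from its preferred facility.

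Concretely, I would fix a public preference profile and a base location profile $\mathbf x$ with a constant number of agents placed at a few points built from the parameter $t$, so that in the (essentially unique) optimal solution each facility serves exactly the agents who prefer it and the preferred facility of a designated ``middle'' agent sits at an optimal spot that is still at positive distance from that agent. Using the approximation-ratio hypothesis I would then (i) rule out ``crossed'' assignments, i.e.\ show that under $f$ every agent is served by its preferred facility: routing any agent through its less-preferred facility is too lossy in utility to be $r$-approximate, and this is exactly where the constraint $t\le\frac1{3\alpha}$ enters; and (ii) pin down that $f$ places the designated agent's preferred facility on a prescribed side of a threshold (equivalently, bounded away from that agent). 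Next, mirroring the ``report a more extreme location'' trick used throughout Section~\ref{sec:loc}, I would consider the perturbed profile in which that agent reports a more extreme location --- possibly chosen in terms of $f$'s own output on $\mathbf x$, as in the proof of Theorem~\ref{thm:47} --- and reapply the approximation ratio: on the perturbed profile the optimum drags the facility to the \emph{other} side of the threshold, in fact essentially onto the agent's true base location, so $f$ must put it there as well. Comparing the two outcomes shows that, in the base instance, the designated agent strictly gains by misreporting its location, contradicting strategyproofness. Optimizing the resulting bound over the free parameter, subject to $t\le 1-\frac1\alpha$ (so that all constructed points and facility positions remain in $[0,1]$) and $t\le\frac1{3\alpha}$ (so that the exclusion in (i) is valid), gives $t=\min\{\frac1{3\alpha},1-\frac1\alpha\}$ and the stated expression.

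The step I expect to be the real obstacle is the case analysis in (i)--(ii): one must check, for the base instance \emph{and} for the perturbed one, exactly which facility serves which agent, and since a less-preferred facility carries the discount $\alpha$ but may still be much closer, these exclusions are not automatic --- they are precisely what pins down the admissible range of $t$. A secondary difficulty is making the perturbed-instance argument tight enough that $f$ is forced onto a negligible neighborhood of the agent's exact base location, so that the utility improvement from misreporting (and hence the SP violation) is strict; lining up the constants so the bound emerges as $1+\frac{\alpha-1}{(2\alpha+2)/t-\alpha}$ rather than something weaker is where the bookkeeping is most delicate.
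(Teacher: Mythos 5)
Your high-level plan --- use the approximation ratio to pin the mechanism's ``orientation'' on a base instance, let one agent misreport an extreme location so that the optimum (and hence the mechanism) flips the orientation and lands the preferred facility essentially on that agent's true position, and conclude a violation of strategyproofness --- is exactly the skeleton of the paper's proof, and your rewriting of the bound as $\frac{2\alpha+2-t}{2\alpha+2-\alpha t}$ correctly identifies the two totals that get compared. But the instance structure you commit to cannot produce this bound, and that is a genuine gap rather than delicate bookkeeping. The paper's instance places four agents at $0,t',1-t',1$ (with $t'=t-\epsilon$) \emph{all preferring $F_1$}; the optimum $\langle 0,1\rangle$ necessarily serves two of the four agents by their \emph{less}-preferred facility $F_2$, and the entire $\alpha$-dependence of the bound comes from the designated agent at $1-t'$ being stuck with utility at most $\max\{3t',\frac{1}{\alpha}\}=\frac{1}{\alpha}$ under truthful reporting (since the ratio forces $y_1\le 2t'$), versus utility at least $1-t'>\frac{1}{\alpha}$ after reporting $1$ and thereby forcing $F_1$ to the right extreme. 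Your step (i) --- engineering the instance so that ``each facility serves exactly the agents who prefer it'' and ruling out all crossed assignments --- is the opposite of what is needed: if every agent were served by its preferred facility both before and after the deviation, the discount $\alpha$ would never enter any utility, and no instance of that shape can certify a ratio of the form $\frac{2\alpha+2-t}{2\alpha+2-\alpha t}$; the gap between the two orientations on the perturbed profile $(0,t',1,1)$ is exactly $t'(1-\frac{1}{\alpha})$, i.e., precisely one agent's loss from being rerouted to the discounted facility.

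Relatedly, your reading of the two constraints defining $t$ is off, which is further evidence the construction is not in hand. The condition $t\le 1-\frac{1}{\alpha}$ is not about keeping points inside $[0,1]$ (the points $0,t',1-t',1$ always are); it is what makes the deviation strictly profitable, $1-t'>\frac{1}{\alpha}$, and it is why the bound correctly degenerates to $1$ as $\alpha\to 1$. The condition $t\le\frac{1}{3\alpha}$ guarantees $3t'\le\frac{1}{\alpha}$, so that even the facility pinned at $y_1\le 2t'$ gives the agent at $1-t'$ no more than $\frac{1}{\alpha}$. Without the all-same-preference instance these conditions have nothing to attach to, so as written the proposal does not close.
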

\begin{proof}
Suppose that $f$ is an SP mechanism with approximation ratio $r<1+\frac{\alpha-1}{(2\alpha+2)/t-\alpha}$. {Define $t':=t-\epsilon<\frac{1}{3\alpha}$} with a sufficiently small $\epsilon>0$.
Consider a 4-agent instance with location profile $\mathbf x=(0,t',1-t',1)$ and preference profile $(1,1,1,1)$.
Solution $\langle 0,1\rangle$ is optimal, and has a total utility of $2-t'+\frac{2-t'}{\alpha}$. 
Let $\mathbf y=\langle y_1,y_2\rangle$ be the output of $f$, and assume w.l.o.g. that $y_1\le y_2$. By the approximation ratio, we have $y_1\le2t'$. Then the utility of agent 3 is at most $\max\{3t',\frac{1}{\alpha}\}=\frac{1}{\alpha}$. {Actually, the value of $t'$ and the approximation ratio guarantee that, if $y_1\le y_2$, the rightmost two agents are served by $F_2$, and vice versa.}

Now consider agent 3 misreporting his location as 1, and the location profile becomes $\mathbf x'=(0,t',1,1)$. The optimal total utility for this new instance is $2+\frac{2-t'}{\alpha}$, attained by the solution $\langle 1,0\rangle$. Let $\mathbf y'=\langle y'_1,y'_2\rangle$ be the output of $f$. If $y_1'\le y_2'$, then the best possible total utility is $2-t'+\frac{2}{\alpha}$, attained by the solution $\langle 0,1\rangle$. By the approximation ratio, it is impossible, and thus it must be $y_1'>y_2'$. Also, by the approximation ratio, we have $y_1'\ge 1-2t'$. Therefore, {after misreporting,} the utility of agent 3 (whose true location is $1-t'$) becomes at least $1-t'>\frac{1}{\alpha}$, implying that he gains by misreporting. Therefore, the mechanism is not SP.
\end{proof}


\vspace{-3mm}\section{Extensions}\label{sec:ex}

\subsection{Private Locations and Preferences}
We can extend our results to the setting of both private location and preference of agents. 
It is a generalization of the setting of either private locations or private preferences. Hence, all lower bound results in Section \ref{sec:pre} and \ref{sec:loc}  are applicable to this setting. For  the upper bound, we notice that Mechanism \ref{mec:4} (which locates two facilities at the two extreme locations of agents, and is $2\alpha$-approximate and $(n-2)\alpha$-approximate for the objectives of minimizing the maximum cost and the total cost) is still GSP, because any misreport on preferences would not affect the output, and any misreport on locations cannot make some agent be closer to the facilities. Mechanism \ref{mec:3} (which locates two facilities both at the point $\frac12$, and is 2-approximate for both objectives of maximizing the minimum utility and the total utility) is clearly GSP, because it does not rely on any reports and has a fixed output.

We can generalize our results to more than two facilities: suppose there are $m$ facilities $F_1,\ldots,F_m$ to be built, and each agent reports the private profile (including an ordinal preference over $m$ facilities) to the planner. Each agent incurs a cost equal to the minimum among  the distance to his $i$-th choice multiplied by a factor $\alpha_{i}$. 
 Proposition \ref{prop} can be restated as: a mechanism is $\beta\alpha_{m}$-approximate, if it is $\beta$-approximate for the typical setting of $m$-facility-location games. Based on that, the results in Section \ref{sec:pre} can be generalized as: for both objectives of minimizing the maximum cost and total cost, we have a GSP and $\alpha_{m}$-approximate mechanism when misreporting only preferences. The utility version can be generalized similarly: Mechanism \ref{mec:3} is GSP and 2-approximate for the minimum/total utility objective, {even without the restriction on misreporting.}

\subsection{Additive Variations}

{Alternatively, we can also} consider a setting where the cost of an agent served by a less preferred facility
is \emph{added} by a factor as presented below, {instead of the multiplicative adjustment.}

 \medskip\noindent\textbf{Cost objectives.} Given the facilities' location profile $\mathbf y=\langle y_1,\ldots,y_m\rangle$, each agent $i\in N$ with preference $\boldsymbol \sigma_i=(\sigma_1,\ldots,\sigma_n)$ has a cost
\begin{align*}
 c_i(\mathbf y)=\min\{& d(x_i,y_{\sigma_1})+\alpha_1, d(x_i,y_{\sigma_2})+\alpha_2,\\
 & \ldots,d(x_i,y_{\sigma_m})+\alpha_m\}
 \end{align*}
with coefficient {$0=\alpha_1\le \alpha_2\le\cdots\le\alpha_m\le 1$.}
We wish to minimize the total cost $SC(\mathbf y)=\sum_{i\in N}c_i(\mathbf y)$ or the maximum cost $BC(\mathbf y)=\max_{i\in N}c_i(\mathbf y)$.

\medskip\noindent\textbf{Utility objectives.} Given the facilities' location profile $\mathbf y$, each agent $i\in N$ with preference $\boldsymbol \sigma_i$ has a utility
 \begin{align*}
 u_i(\mathbf y)=\max\{& 1-d(x_i,y_{\sigma_1})-\alpha_1, 1-d(x_i,y_{\sigma_2})-\alpha_2,\\
 & \ldots,1-d(x_i,y_{\sigma_m})-\alpha_m\}
 \end{align*}
 with coefficient {$0=\alpha_1\le \alpha_2\le\cdots\le\alpha_m\le 1$.}
 We wish to maximize the total utility $SU(\mathbf y)= \sum_{i\in N}u_i(\mathbf y)$ or minimum utility {$BU(\mathbf y)=\min_{i\in N}u_i(\mathbf y)$.}

In contrast to the multiplicative version, Proposition \ref{prop} no longer applies, and we cannot directly utilize the results on the standard setting.
We present preliminary results for the case of two facility under different private information.

\smallskip\noindent\textbf{Unknown preferences.}
We show that any mechanism that does not consider preference must have unbounded approximation ratio for the social/maximum cost objective.
Consider an instance where all agents preferring $F_1$ are located at 0 and all agents preferring $F_2$ are located at 1.
The optimal social/maximum cost is 0, while a mechanism ignoring the preferences may induce a positive social/maximum cost.

{Though designing good mechanisms for the cost objectives is open, we can obtain better results for the utility objectives. It is easy to see that, for the total/minimum utility objective, Mechanism \ref{mec:3} is GSP and 2-approximate. In particular, when $\alpha_2\ge 0.5$, Mechanism \ref{mec:2} is GSP and optimal. }

\smallskip\noindent\textbf{Unknown locations.}
{Any mechanism ignoring the preferences would also have unbounded approximation ratio for the cost objectives, including Mechanism \ref{mec:4} which selects the two extreme locations of agents. Another intuitive approach is locating $F_1$ at the median of agents who prefer $F_1$, and locating $F_2$ at the median of agents who prefer $F_2$. 
It guarantees the strategyproofness, but has a bad approximation ratio when $\alpha$ is small. Under the total/minimum utility objective, Mechanism \ref{mec:3} is clearly GSP and 2-approximate. Mechanism \ref{mec:2} is no longer SP, because an agent can manipulate the location such that the preferred facility is built on his true location. }




\section{Conclusion}\label{sec:dis}

For the facility location problem where each agent has an ordinal preference over facilities, we derive upper and lower bounds on the approximation ratio of truthful mechanisms under four possible objectives, {which are asymptotically tight up to a constant.} It remains an interesting open problem to narrow the gaps of the bounds. {Randomization would be an attractive direction.} The randomized Proportional Mechanism proposed in \cite{lu2010asymptotically} (which locates $F_1$ on the location of a randomly chosen agent, and locates $F_2$ at the location of agent $i\in N$ with probability proportional to his (weighted) distance to $F_1$) is SP and 4-approximate for the standard 2-facility game under the total cost objective; but it {cannot be used} 
for our model with preferences, because it is not SP  for both settings of unknown preferences and unknown locations. The Random-Dictatorship-like mechanisms can also have unbounded approximation ratios. In view of these, we believe that the lower bounds can be proved to be greater via more carefully designed instances. 
{Also, it would be interesting to further explore the additive variations.} 

%
%
%

\bibliographystyle{splncs04}
\bibliography{reference}

\end{document}